\newtheorem{Proposition}{Proposition}
\newtheorem{Lemma}
{Lemma}
\newtheorem{Theorem}{Theorem}
\newtheorem{Remark}{Remark}
\newenvironment{breakablealgorithm}
{
\begin{center} %
\refstepcounter{algorithm}
\hrule height.8pt depth0pt \kern2pt
\renewcommand{\caption}[2][\relax]{
{\raggedright\textbf{\ALG@name~\thealgorithm} ##2\par}

\ifx\relax##1\relax

\addcontentsline{loa}{algorithm}{\protect\numberline{\thealgorithm}##2}%

\else

\addcontentsline{loa}{algorithm}{\protect\numberline{\thealgorithm}##1}%

\fi

\kern2pt\hrule\kern2pt

}

}{

\kern2pt\hrule\relax

\end{center} %

}
\title{Distributed quantum algorithm for discrete logarithm problem}
\begin{document}
\begin{frontmatter}
\author{Hao Li $^{\rm a,b}$}
\author{Daowen Qiu $^{\rm a,b}$}
\address{$^{\rm a}$ Institute of Quantum Computing and Software, School of Computer Science and Engineering,  \\ Sun Yat-sen University, Guangzhou 510006, China;}
\address{$^{\rm b}$The Guangdong Key Laboratory of Information Security Technology, \\Sun Yat-sen University, Guangzhou 510006, China;}
\cortext[mycorrespondingauthor]{issqdw@mail.sysu.edu.cn (D. Qiu).}

\begin{abstract}
The  quantum algorithm with polynomial time for  discrete logarithm problem proposed by  Shor is  one of the most significant quantum algorithms,  but  a large number of qubits  may be required in the Noisy Intermediate-scale Quantum  (NISQ) era.
The main contributions of this paper are as follows: (1) A distributed  quantum algorithm for discrete logarithm problem is designed, and its space complexity and success probability exhibit certain advantages to some extent; (2) The classical error correction technique  proposed by Xiao and Qiu et al. is generalized by extending  three bits to  more than three bits.
\end{abstract}

\end{frontmatter}

\section{Introduction}

Quantum computing is developing rapidly,  showing advantages over classical computing in factoring larger integers \cite{shor1994algorithms}, searching unstructured database \cite{Grover1996}, solving linear system of equations \cite{harrow2009quantum}, and other fields. However, to implement  quantum algorithms in practice, medium or large scale general quantum computers are required. With the current state of the art in quantum physics, achieving such  quantum computers is still  difficult.  Therefore, in order to promote the application of quantum algorithms in the NISQ era, it is worthy of consideration for reducing the amount of qubits or other quantum resources required for quantum computers.
	
	Distributed quantum computing is a computational method that combines distributed computing with quantum computing  \cite{avron2021quantum,beals2013effcient,li2017application,yimsiriwattana2004distributed}. It aims to solve problems by utilizing multiple smaller quantum computers working in concert. Distributed quantum computing is usually used to reduce the resources required for each quantum computer, including the number of qubits, circuit depth and gate complexity. In light of these potential benefits, distributed quantum computing has been deeply studied  (e.g., \cite{avron2021quantum,beals2013effcient,li2017application,yimsiriwattana2004distributed,qiu2022distributed,tan2022distributed,Xiao2023DQAShor,Hao2023DGSP}).

We briefly review related contributions concerning distributed quantum computing. In 2004, the first distributed Shor’s algorithm was proposed by
Yimsiriwattana and Lomonaco et al. \cite{yimsiriwattana2004distributed}. In 2013, Beals and Brierley et al. proposed an algorithm for parallel addressing quantum memory \cite{beals2013effcient}.	In 2017,  Li and Qiu  et al.  proposed a distributed
phase estimation algorithm \cite{li2017application}.  In 2022, Qiu and Luo et al. proposed a distributed Grover's algorithm \cite{qiu2022distributed};  Tan  and Qiu et al. proposed a distributed quantum algorithm for Simon's problem \cite{tan2022distributed}.  In 2023,  Xiao  and Qiu et al.  proposed a new distributed Shor's algorithm \cite{Xiao2023DQAShor,Xiao2023DQAkShor}. In 2024,  Li  and Qiu et al. proposed  distributed exact quantum algorithms for  the Deutsch-Jozsa problem and the   generalized Simon’s problem  \cite{Hao2023DDJ,Hao2023DGSP}, as well as a distributed exact multi-objective quantum search algorithm \cite{Hao2024DEMA}.  In 2025, Li  and Qiu et al. proposed  a  distributed generalized Deutsch-Jozsa algorithm \cite{Li2025}. These distributed quantum algorithms can reduce quantum resources to some extent.
	
	The  quantum algorithm for discrete logarithm problem   is considered  to be one of the most significant algorithms in quantum computing \cite{shor1994algorithms},  with a time complexity of $O(L^3)$, a space complexity of $O(L)$, and a probability of success of $\dfrac{r-1}{r}(1-\epsilon)$,
 where $r$ is the least positive integer that satisfies $a^r\equiv 1 (\bmod\ N) $, $0<\epsilon<1$, and $L=\left\lfloor\log_2 N\right\rfloor+1$.  Since the best general classical deterministic algorithm for solving discrete logarithm problem has a time complexity of  $O(\sqrt{N}\log N)$ and a space complexity of $O(\sqrt{N})$ \cite{TR2022}, the  quantum algorithm for discrete logarithm problem demonstrates quantum advantages. 	
 
The  quantum algorithm for discrete logarithm problem can be applied to break  the Diffie-Hellman key exchange protocol \cite{DH1976}, which is an important protocol in public-key cryptosystems.   In 2020, for the problem of computing short discrete logarithms, Eker$\rm\mathring{a}$ proposed a different quantum algorithm   \cite{ME2020}. In 2022, for discrete logarithm problem  in a semigroup, Tinani and Rosenthal proposed a classical deterministic  algorithm \cite{TR2022}. 
	 In 2024, Hhan and Yamakawa et al. investigated the quantum computational complexity of  discrete logarithm problem \cite{Minki2024}.
		
	However, the quantum algorithm applied to solve discrete logarithm problem requires a large number of qubits. Therefore, it is timely to reduce the resources required  by designing novel approaches such as distributed quantum algorithms. 
	In this paper, our goal is to design a  distributed  quantum algorithm for  discrete logarithm problem, which has a  lower  space complexity than Shor's quantum algorithm.
	Furthermore, the success probability of our algorithm is higher than that of Shor's quantum algorithm. In our  algorithm, we generalize the classical error correction technique in the  algorithm proposed by Xiao and Qiu et al \cite{Xiao2023DQAkShor}, by extending  it from three bits to more than three bits. 
	
	The remaining part of the paper is organized as follows. In Section \ref{sec:preliminaries}, we review   the quantum Fourier transform, phase estimation algorithm,  and Shor's quantum algorithm for  discrete logarithm problem.  In Section \ref{sec:distributed discrete logarithmic quantum algorithm}, we  describe  a distributed  quantum algorithm for discrete logarithm problem, and analyze its correctness and complexity. Finally,  in Section \ref{sec:conclusions}, we conclude with a summary. 

\section{Preliminaries}\label{sec:preliminaries}
In this section, we  review the quantum Fourier transform, phase estimation algorithm, Shor's quantum algorithm for discrete logarithm problem, and other relevant concepts that will be used in the paper.

\subsection{Quantum Fourier transform}

	The quantum Fourier transform is a unitary operator that acts on the standard basis states as follows:
\begin{equation}
QFT |j\rangle=\frac{1}{\sqrt{2^n}}\sum_{k=0}^{2^n-1}e^{2\pi ijk/2^n}|k\rangle\text{,}
\end{equation}
for $j=0,1,\cdots,2^n-1$.

 The inverse quantum Fourier transform acts as follows:
\begin{equation}\label{inverse_QFT}
QFT^{-1} \frac{1}{\sqrt{2^n}}\sum_{k=0}^{2^n-1}e^{2\pi ijk/2^n}|k\rangle=|j\rangle\text{,}
\end{equation}
for $j=0,1,\cdots,2^n-1$.

The quantum Fourier transform and its inverse can be implemented using $O\left(n^2\right)$ elementary gates (i.e., $O\left(n^2\right)$ single-qubit and two-qubit gates) \cite{shor1994algorithms,nielsen2000quantum}.

\subsection{Phase estimation algorithm}

	Phase estimation algorithm is a practical application of the quantum Fourier transform.	
	 Consider a unitary operator $U$ and  a quantum state $|u\rangle$ such that 
\begin{equation}\label{eq:Uu}
U|u\rangle=e^{2\pi i\omega}|u\rangle\text{,}
\end{equation}
 for some real number $\omega\in[0,1)$.
 
  If we can implement controlled operation $C_m(U)$ satisfying that
\begin{equation}\label{CmU}
C_m(U)|j \rangle|u\rangle=|j\rangle U^j|u\rangle\text{,}
\end{equation}
for any positive integer $m$ and $m$-bit string $j$, 
 then we can apply the phase estimation algorithm to estimate $\omega$ (see Algorithm \ref{alg:PE}).  Fig. \ref{fig:controlled-U} shows the circuit diagram for the implementation of $C_m(U)$.
 
\begin{figure}[H]
	\centering
	\includegraphics[width=0.6\textwidth]{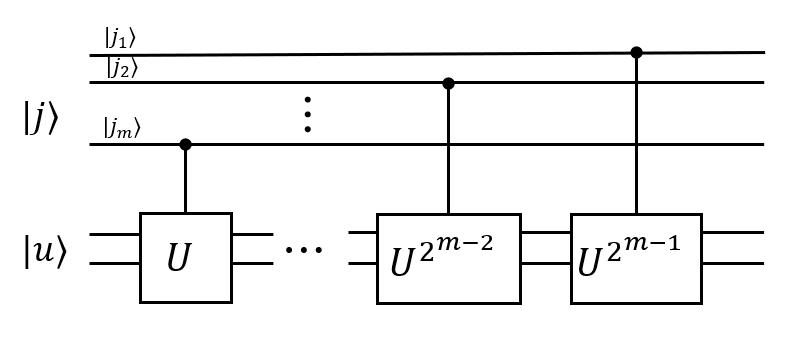}
	\caption{\label{fig:controlled-U}Implementation for $C_m(U)$.}
\end{figure}

For convenience, some notations are given below.
For any $x\in\{0,1\}^n$, denote
\begin{itemize}
\item $x_{[i,j]}=x_{i} x_{i+1}\cdots x_j$, where  $1\leq i\leq j \leq n$.
\item $|x|$: the length of string $x$.
\end{itemize}

For any real number $\omega=a_{1}a_{2}\cdots a_l.$ $b_1b_2\cdots$, denote
\begin{itemize}
\item $\omega_{[i,j]}=a_{i} a_{i+1}\cdots a_j$, where    $1\leq i\leq j \leq l$.
\item $\omega_{\{i,j\}}=b_i b_{i+1}\cdots b_j$, where $1\leq i\leq j$.
\item $|\omega|$: the absolute value of $\omega$.
\end{itemize}


 In this paper, for the sake of clarity and brevity, we take a bit string to be the same as its corresponding equivalent decimal number. In the following, we give a notation concerning the relationship between the distance between two bit strings.

For any $x,y\in\{0,1\}^t$, denote
\begin{equation}
 d_t(x,y)=\min\left(|x-y|, 2^t-|x-y|\right).
\end{equation}

In the following, we present an important lemma for analyzing the  algorithms in this paper. 

\begin{Lemma}[See \cite{Xiao2023DQAShor}]\label{d_t}
Let $x, y\in\{0,1\}^t$. Then, we have:  \\
{\rm (I)} $d_t(x,y)=\min_{b\in B}|b|$, where $B=\{b|(x+b)\bmod 2^{t}=y,  -(2^t-1)\leq b\leq 2^{t}-1\}$. \\
{\rm (II)} $d_t(\cdot,\cdot)$ is a distance on $\{0,1\}^t$.\\
{\rm (III)} If $d_t(x,y)<2^{t-t_0}$, then
$d_{t_0}\left(x_{[1,t_0]},y_{[1,t_0]}\right)\leq 1$,
where  $1\leq t_0<t$.
\end{Lemma}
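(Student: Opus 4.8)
The plan is to prove the three parts in order, using part (I) as the main tool for parts (II) and (III).

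For (I), I would first describe the set $B$ explicitly. The condition $(x+b)\bmod 2^t=y$ forces $b\equiv y-x\pmod{2^t}$, and since the interval $[-(2^t-1),2^t-1]$ has length less than $2^{t+1}$ it contains at most two integers congruent to $y-x$: writing $r=(y-x)\bmod 2^t\in\{0,1,\dots,2^t-1\}$, these are $r$ and $r-2^t$, the latter lying in range exactly when $r\ge 1$. Hence $\min_{b\in B}|b|=\min(r,2^t-r)$ when $r\ge1$, and $\min_{b\in B}|b|=0$ when $r=0$ (in which case $x=y$). Finally I would note that, because $x,y\in\{0,1\}^t$, one has $r=|x-y|$ if $y\ge x$ and $r=2^t-|x-y|$ otherwise, so in every case $\{r,\,2^t-r\}=\{\,|x-y|,\,2^t-|x-y|\,\}$ and therefore $\min(r,2^t-r)=d_t(x,y)$.

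For (II), non-negativity and symmetry are immediate, and $d_t(x,y)=0$ forces $|x-y|=0$ because $2^t-|x-y|\ge 1$ for $x,y\in\{0,1\}^t$; thus $d_t(x,y)=0\iff x=y$. For the triangle inequality I would invoke (I): choose $b_1$ with $(x+b_1)\bmod 2^t=y$ and $|b_1|=d_t(x,y)$, and $b_2$ with $(y+b_2)\bmod 2^t=z$ and $|b_2|=d_t(y,z)$. Then $(x+(b_1+b_2))\bmod 2^t=z$. If $b_1+b_2\in[-(2^t-1),2^t-1]$ it lies in the set $B$ attached to the pair $(x,z)$, so $d_t(x,z)\le|b_1+b_2|\le|b_1|+|b_2|=d_t(x,y)+d_t(y,z)$. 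Otherwise $|b_1+b_2|\ge 2^t$, and then $d_t(x,y)+d_t(y,z)=|b_1|+|b_2|\ge|b_1+b_2|\ge 2^t>2^{t-1}\ge d_t(x,z)$, where the last inequality holds because $2d_t(x,z)\le|x-z|+(2^t-|x-z|)=2^t$. In either case the triangle inequality follows.

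For (III), set $s=t-t_0$, so that $x_{[1,t_0]}=\lfloor x/2^s\rfloor$ and $y_{[1,t_0]}=\lfloor y/2^s\rfloor$, and apply (I) to obtain $b$ with $(x+b)\bmod 2^t=y$ and $|b|=d_t(x,y)<2^s$. Since $2^t/2^s=2^{t_0}$, reducing $y=(x+b)\bmod 2^t$ modulo $2^t$ gives $\lfloor y/2^s\rfloor\equiv\lfloor(x+b)/2^s\rfloor\pmod{2^{t_0}}$; and because $|b|<2^s$, writing $x=q\cdot2^s+\rho$ with $0\le\rho<2^s$ shows $\lfloor(x+b)/2^s\rfloor-\lfloor x/2^s\rfloor\in\{-1,0,1\}$. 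Hence $y_{[1,t_0]}-x_{[1,t_0]}\equiv c\pmod{2^{t_0}}$ for some $c\in\{-1,0,1\}$; since $|c|\le 1\le 2^{t_0}-1$, part (I) applied to the pair $\big(x_{[1,t_0]},y_{[1,t_0]}\big)$ gives $d_{t_0}(x_{[1,t_0]},y_{[1,t_0]})\le|c|\le 1$. I expect the main obstacle to be the bookkeeping in (III): making the passage from the congruence $\lfloor y/2^s\rfloor\equiv\lfloor(x+b)/2^s\rfloor\pmod{2^{t_0}}$ to the bound on $d_{t_0}$ fully rigorous, in particular handling the wrap-around case $x+b\notin[0,2^t)$ cleanly rather than by a lengthy split into sub-cases.
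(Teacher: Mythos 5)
Your proof is correct in all three parts. Note that the paper itself does not prove Lemma \ref{d_t}: it is quoted from \cite{Xiao2023DQAShor} without an in-paper argument, so there is nothing internal to compare against line by line. Your treatment of (I) via the explicit description of $B$ (the two candidates $r$ and $r-2^t$ with $r=(y-x)\bmod 2^t$) is sound, and your triangle-inequality argument in (II), including the overflow case $|b_1+b_2|\ge 2^t$ handled by $d_t(x,z)\le 2^{t-1}$, is complete. For (III), your floor-division bookkeeping does work: writing $y=x+b-m2^t$ gives $\lfloor y/2^s\rfloor=\lfloor (x+b)/2^s\rfloor-m2^{t_0}$ exactly because $m2^t$ is an integer multiple of $2^s$, and the bound $\lfloor (x+b)/2^s\rfloor-\lfloor x/2^s\rfloor\in\{-1,0,1\}$ follows from $|b|<2^s$; then (I) applied at length $t_0$ (where $|c|\le 1\le 2^{t_0}-1$ puts $c$ in the relevant set $B$) finishes the claim, so the wrap-around concern you flag is already absorbed by the congruence step and needs no further case split. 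Stylistically, your route for (III) differs from how the paper handles the analogous statement in its own Lemma \ref{gd_t}: there the authors stay entirely within the $Sum$/$B$-set formalism, splitting $x$ into leading and trailing blocks and bounding $d_t\left(2^{t-t_1}x_{[1,t_1]},2^{t-t_1}y_{[1,t_1]}\right)\le |b|+\left|x_{[t_1+1,t]}-y_{[t_1+1,t]}\right|$ before dividing out $2^{t-t_1}$, whereas you work with floors directly; the two are equivalent, with the paper's style integrating more smoothly with its later $Sum(\cdot,\cdot)$ manipulations and yours being somewhat more self-contained arithmetically.
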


In the following, we give a lemma related to the distance $d_t(\cdot,\cdot)$.

\begin{Lemma}\label{gd_t}
Let $x, y\in\{0,1\}^t$. 
 If $d_t(x,y)<2^{t-t_0}$, then
$d_{t_1}\left(x_{[1,t_1]},y_{[1,t_1]}\right)\leq 2^{t_1-t_0}$,
where  $1\leq t_0\leq t_1\leq t$.
\end{Lemma}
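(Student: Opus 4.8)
The plan is to reduce Lemma \ref{gd_t} to Lemma \ref{d_t} by a combination of truncation and rescaling arguments. The hypothesis $d_t(x,y)<2^{t-t_0}$ controls how close $x$ and $y$ are as cyclic $t$-bit strings, and the conclusion asks for control of their $t_1$-bit prefixes at a coarser scale $2^{t_1-t_0}$. The key observation is that part (III) of Lemma \ref{d_t} already handles the extreme case $t_1=t_0$ (giving distance $\leq 1 = 2^{t_1-t_0}$), so what remains is to interpolate for $t_0\leq t_1\leq t$. I would first use part (I) of Lemma \ref{d_t}: pick $b$ with $(x+b)\bmod 2^t=y$ and $|b|=d_t(x,y)<2^{t-t_0}$.

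Next I would analyze how taking the length-$t_1$ prefix interacts with the ``wrap-around'' addition. Write $x = 2^{t-t_1} x_{[1,t_1]} + x_{[t_1+1,t]}$ and similarly for $y$. From $(x+b)\bmod 2^t = y$ we get $x+b = y + \varepsilon 2^t$ for some $\varepsilon\in\{-1,0,1\}$ (since $|b|<2^t$). Dividing by $2^{t-t_1}$ and separating integer and fractional parts, I would show that $x_{[1,t_1]} + b' \equiv y_{[1,t_1]} \pmod{2^{t_1}}$ for some integer $b'$ with $|b'| \leq |b|/2^{t-t_1} + 1 < 2^{t-t_0}/2^{t-t_1} + 1 = 2^{t_1-t_0}+1$, hence $|b'|\leq 2^{t_1-t_0}$ since $b'$ is an integer and $2^{t_1-t_0}\geq 1$. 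Applying part (I) of Lemma \ref{d_t} in the other direction then gives $d_{t_1}(x_{[1,t_1]},y_{[1,t_1]}) \leq |b'| \leq 2^{t_1-t_0}$.

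The main obstacle will be handling the carry from the low-order bits carefully: when we truncate $x$ and $y$ to their prefixes, the contribution of $x_{[t_1+1,t]}$ and $y_{[t_1+1,t]}$ together with $b$ can produce a carry into the $t_1$-th bit, which is exactly the ``$+1$'' in the estimate above. I need to make sure this carry accounting is tight enough that $|b'|$ does not exceed $2^{t_1-t_0}$ — this works precisely because $|b|$ is \emph{strictly} less than $2^{t-t_0}$, so $|b|/2^{t-t_1}$ is strictly less than $2^{t_1-t_0}$, leaving just enough room for the unit carry after rounding to an integer. I would also double-check the boundary cases $t_1=t$ (trivial, the statement is the hypothesis with $b'=b$) and $t_1=t_0$ (recovering Lemma \ref{d_t}(III)) to confirm consistency. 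An alternative, perhaps cleaner, route is to induct on $t_1$ downward from $t$ to $t_0$, showing at each step that dropping one more bit at most doubles the allowed distance plus possibly adds a carry, but the direct prefix computation above seems more transparent and avoids accumulating error over many induction steps.
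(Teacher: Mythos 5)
Your proposal is correct and follows essentially the same route as the paper: both decompose $x,y$ into length-$t_1$ prefixes and suffixes, absorb the suffix difference (the carry) into a modified shift via Lemma \ref{d_t}(I), bound it by $|b|+2^{t-t_1}<2^{t-t_0}+2^{t-t_1}$, and use integrality of the distance to sharpen $<2^{t_1-t_0}+1$ to $\leq 2^{t_1-t_0}$. The only cosmetic difference is that you divide by $2^{t-t_1}$ before invoking Lemma \ref{d_t}(I) at length $t_1$, whereas the paper applies it at length $t$ to the scaled prefixes $2^{t-t_1}x_{[1,t_1]}$, $2^{t-t_1}y_{[1,t_1]}$ and then rescales.
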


\begin{proof}
If $t_1=t$, the lemma clearly holds. If $t_1<t$, by (I) of  Lemma \ref{d_t} and $d_t(x,y)<2^{t-t_0}$, there exists an integer $b$ with $|b|<2^{t-t_0}$ such that
\begin{equation}
\left(2^{t-t_1} x_{[1,t_1]}+x_{[t_1+1,t]}+b\right)\bmod 2^t=2^{t-t_1} y_{[1,t_1]}+y_{[t_1+1,t]}.
\end{equation}
Then by (I) of  Lemma \ref{d_t} again, we have
\begin{align}
d_t\left(2^{t-t_1}x_{[1,t_1]},2^{t-t_1}y_{[1,t_1]}\right)&\leq \left|b+x_{[t_1+1,t]}-y_{[t_1+1,t]}\right|\\&\leq |b|+\left|x_{[t_1+1,t]}-y_{[t_1+1,t]}\right|\\
&<2^{t-t_0}+2^{t-t_1}.
\end{align}
Hence
\begin{equation}
d_{t_1}\left(x_{[1,t_1]},y_{[1,t_1]}\right)< 2^{t_1-t_0}+1,
\end{equation}
and the proposition holds.
\end{proof}

In the following, we describe the phase estimation algorithm that will be importantly used in this paper.

\vspace{0.5em}

\begin{breakablealgorithm}
\caption{Phase estimation algorithm}
\label{alg:PE}
\begin{algorithmic}[1]

\noindent\textbf{Input}: (1) An eigenstate $\ket{u}$ of the  unitary operator $U$ with eigenvalue $e^{2\pi i\omega}$, where $\omega\in[0,1)$.  (2) The controlled operator $C_t(U)$ such that $C_t(U)|j \rangle|u\rangle=|j\rangle U^j|u\rangle$, where $t=n+\left\lceil\log_2\left(2+\dfrac{1}{2\epsilon}\right)\right\rceil$,  $n\in\mathbb{N}^+$,  $\epsilon\in(0,1)$,  $j$ is a $t$-bit string.

\noindent\textbf{Output}: A $t$-bit string $\widetilde{\omega}$ such that $d_n\left(\widetilde{\omega}_{[1,n]},\omega_{\{1,n\}}\right)\leq 1$.

\noindent\textbf{Procedure}:

\State Create initial state $|0\rangle^{\otimes t}|u\rangle$.

\State Apply $H^{\otimes t}$ to the first register:

   $H^{\otimes t}|0\rangle^{\otimes t}|u\rangle=\dfrac{1}{\sqrt{2^t}}\sum\limits_{j=0}^{2^t-1}|j\rangle|u\rangle$.
\State Apply $C_t(U)$:

  $C_t(U)\dfrac{1}{\sqrt{2^t}}\sum\limits_{j=0}^{2^t-1}|j\rangle|u\rangle=\dfrac{1}{\sqrt{2^t}}\sum\limits_{j=0}^{2^t-1}|j\rangle U^j|u\rangle=\dfrac{1}{\sqrt{2^t}}\sum\limits_{j=0}^{2^t-1}|j\rangle e^{2\pi ij\omega}|u\rangle$.
  
\State Apply $QFT^{-1}$:

$QFT^{-1}\dfrac{1}{\sqrt{2^t}}\sum\limits_{j=0}^{2^t-1}e^{2\pi ij\omega}|j\rangle |u\rangle=\dfrac{1}{2^t}\sum\limits_{j=0}^{2^t-1}\sum\limits_{k=0}^{2^t-1}e^{2\pi ij(\omega-k/2^t)}|k\rangle |u\rangle$.
\State Measure the first register:

   Obtain a $t$-bit string $\widetilde{\omega}$.
\end{algorithmic}
\end{breakablealgorithm}

\vspace{0.5em}

	The goal of  the phase estimation algorithm is to estimate $\omega$, which can be more accurately described by the following propositions.

\begin{Proposition}[See \cite{nielsen2000quantum}]\label{phase_estimation_result}
	In  Algorithm \ref{alg:PE}, for any $n\in\mathbb{N}^+$ and any $\epsilon>0$, if $t=n+\left\lceil\log_2\left(2+\dfrac{1}{2\epsilon}\right)\right\rceil$, then the probability of $d_t\left(\widetilde{\omega},\omega_{\{1,t\}}\right)<2^{t-n}$ is at least $1-\epsilon$. 
\end{Proposition}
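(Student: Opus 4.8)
The plan is to reproduce the standard phase-estimation analysis of \cite{nielsen2000quantum}, adapted to the circular-distance notation $d_t(\cdot,\cdot)$ and the truncation $\omega_{\{1,t\}}$ used in this paper. The three ingredients are: an exact (geometric-sum) formula for the outcome probabilities after Step~4 of Algorithm~\ref{alg:PE}; a pointwise upper bound on those probabilities via elementary trigonometric inequalities; and a tail sum that the choice of $t$ makes smaller than $\epsilon$.

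First I would note that measuring the first register after Step~4 returns a $t$-bit string $m$ with probability $p(m)=2^{-2t}\bigl|\sum_{j=0}^{2^t-1}e^{2\pi ij(\omega-m/2^t)}\bigr|^2$. Write $b:=\omega_{\{1,t\}}$; since $\omega\in[0,1)$ this equals $\lfloor 2^t\omega\rfloor$, so $\delta:=\omega-b/2^t\in[0,2^{-t})$. If $\delta=0$ then $\omega=b/2^t$, $p(b)=1$, and there is nothing to prove, so assume $\delta>0$. Each $t$-bit string $m$ has a unique representation $m=(b+\ell)\bmod 2^t$ with $\ell\in\{-2^{t-1}+1,\dots,2^{t-1}\}$, and a short case check against the definition of $d_t$ (using $|\ell|\le 2^{t-1}$) gives $d_t(m,b)=|\ell|$. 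Since $d_t$ is integer-valued, writing the outcome as $\widetilde\omega=(b+\ell)\bmod 2^t$, the event $d_t(\widetilde\omega,b)<2^{t-n}$ is exactly $|\ell|\le e$, where $e:=2^{t-n}-1$; the hypothesis on $t$ forces $e\ge 3$. Hence it suffices to prove $\sum_{|\ell|>e}p\bigl((b+\ell)\bmod 2^t\bigr)\le\epsilon$.

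Next I would evaluate the geometric sum. Using $b,\ell\in\mathbb{Z}$,
\begin{equation}
p\bigl((b+\ell)\bmod 2^t\bigr)=\frac{1}{2^{2t}}\left|\frac{1-e^{2\pi i\,2^t\delta}}{1-e^{2\pi i(\delta-\ell/2^t)}}\right|^2 .
\end{equation}
Bounding the numerator by $|1-e^{i\theta}|\le 2$ and the denominator by $|1-e^{i\theta}|=2|\sin(\theta/2)|\ge\tfrac{2}{\pi}|\theta|$ (legitimate since $|\delta-\ell/2^t|\le\tfrac12$ on the stated range, so $\theta:=2\pi(\delta-\ell/2^t)$ has $|\theta|\le\pi$), I obtain $p\bigl((b+\ell)\bmod 2^t\bigr)\le\frac{1}{4(2^t\delta-\ell)^2}$; and $2^t\delta\in[0,1)$ forces $|2^t\delta-\ell|\ge|\ell|-1$, so $p\bigl((b+\ell)\bmod 2^t\bigr)\le\frac{1}{4(|\ell|-1)^2}$ for $|\ell|\ge 2$.

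Finally, summing over $|\ell|>e$ and using $\sum_{k\ge e}k^{-2}\le\sum_{k\ge e}\tfrac{1}{k(k-1)}=\tfrac{1}{e-1}$,
\begin{equation}
\sum_{|\ell|>e}p\bigl((b+\ell)\bmod 2^t\bigr)\le 2\sum_{\ell=e+1}^{\infty}\frac{1}{4(\ell-1)^2}\le\frac{1}{2(e-1)}=\frac{1}{2(2^{t-n}-2)} .
\end{equation}
Since $t=n+\lceil\log_2(2+\tfrac{1}{2\epsilon})\rceil$ gives $2^{t-n}\ge 2+\tfrac{1}{2\epsilon}$, i.e.\ $2^{t-n}-2\ge\tfrac{1}{2\epsilon}$, the right-hand side is at most $\epsilon$, which completes the proof. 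I expect the only slightly fiddly points to be the bookkeeping identities $\omega_{\{1,t\}}=\lfloor 2^t\omega\rfloor$ and $d_t\bigl((b+\ell)\bmod 2^t,\,b\bigr)=|\ell|$ on the stated range of $\ell$, and the check that $|\theta|\le\pi$ so the sine inequality applies; the remaining estimates are entirely routine.
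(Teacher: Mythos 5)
Your proof is correct and follows essentially the same route as the paper, which does not prove this proposition itself but cites the standard phase-estimation error analysis of Nielsen and Chuang; your argument is precisely that textbook analysis (geometric-sum formula, the bound $p\le \frac{1}{4(|\ell|-1)^2}$, and the tail estimate $\frac{1}{2(e-1)}$ with $e=2^{t-n}-1$), carefully adapted to the paper's notation $\omega_{\{1,t\}}=\lfloor 2^t\omega\rfloor$ and the circular distance $d_t$.
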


\begin{Proposition}\label{phase_estimation2}
	In  Algorithm \ref{alg:PE}, for any $n\in\mathbb{N}^+$ and any $\epsilon>0$, if $t=n+\left\lceil\log_2\left(2+\dfrac{1}{2\epsilon}\right)\right\rceil$, then the probability of $d_m\left(\widetilde{\omega}_{[1,m]},\omega_{\{1,m\}}\right)\leq 2^{m-n}$ is at least $1-\epsilon$, where  $n\leq m\leq t$.
\end{Proposition}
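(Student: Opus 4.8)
The plan is to obtain Proposition \ref{phase_estimation2} as an immediate consequence of Proposition \ref{phase_estimation_result} together with Lemma \ref{gd_t}. First I would invoke Proposition \ref{phase_estimation_result}: since $t=n+\left\lceil\log_2\left(2+\frac{1}{2\epsilon}\right)\right\rceil$, the measurement outcome $\widetilde{\omega}$ of Algorithm \ref{alg:PE} satisfies $d_t\!\left(\widetilde{\omega},\omega_{\{1,t\}}\right)<2^{t-n}$ with probability at least $1-\epsilon$. Call this event $E$; everything that follows is a deterministic implication valid on $E$, so the probability bound is inherited.

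Next, on the event $E$ I would apply Lemma \ref{gd_t} with $x=\widetilde{\omega}$, $y=\omega_{\{1,t\}}$ (both genuine $t$-bit strings), $t_0=n$ and $t_1=m$. The hypothesis $d_t(x,y)<2^{t-t_0}$ is exactly the content of $E$, and the index constraint $1\le t_0\le t_1\le t$ holds because $n\in\mathbb{N}^+$ and $n\le m\le t$ by assumption. Lemma \ref{gd_t} then gives
\begin{equation}
d_m\!\left(\widetilde{\omega}_{[1,m]},\left(\omega_{\{1,t\}}\right)_{[1,m]}\right)\le 2^{m-n}.
\end{equation}
Finally I would note the purely notational identity $\left(\omega_{\{1,t\}}\right)_{[1,m]}=\omega_{\{1,m\}}$: the string $\omega_{\{1,t\}}=b_1b_2\cdots b_t$ consists of the first $t$ fractional bits of $\omega$, and truncating it to its first $m$ coordinates yields $b_1\cdots b_m=\omega_{\{1,m\}}$. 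Substituting this in gives $d_m\!\left(\widetilde{\omega}_{[1,m]},\omega_{\{1,m\}}\right)\le 2^{m-n}$ on $E$, hence with probability at least $1-\epsilon$, which is the claim.

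There is essentially no analytic obstacle here; the proof is a two-line deduction. The only points that need care are bookkeeping ones: checking that the strict/non-strict inequalities line up (Proposition \ref{phase_estimation_result} supplies the strict bound $<2^{t-n}$ that Lemma \ref{gd_t} requires, and Lemma \ref{gd_t} returns the non-strict $\le 2^{m-n}$ matching the statement), and verifying the boundary cases $m=n$ (where $2^{m-n}=1$, consistent with the sharper Lemma \ref{d_t}(III)) and $m=t$ (where Lemma \ref{gd_t} is trivially true, as noted in its own proof). I would also make explicit that $\widetilde{\omega}$ and $\omega_{\{1,t\}}$ are both length-$t$ strings so that $d_t(\cdot,\cdot)$ and the truncation notation are well defined.
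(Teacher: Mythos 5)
Your proposal is correct and follows exactly the route the paper takes: its proof consists of the single line ``Immediate from Lemma \ref{gd_t} and Proposition \ref{phase_estimation_result},'' and your argument just spells out that same deduction (apply Proposition \ref{phase_estimation_result} to get the event $d_t\left(\widetilde{\omega},\omega_{\{1,t\}}\right)<2^{t-n}$, then apply Lemma \ref{gd_t} with $t_0=n$, $t_1=m$). Your added bookkeeping about the truncation identity and the strict-versus-nonstrict inequalities is accurate and consistent with the paper.
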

\begin{proof}
Immediate from Lemma \ref{gd_t} and Proposition \ref{phase_estimation_result}.
\end{proof}


\begin{Remark}
If the fractional part of $\omega$ does not exceed $t$ bits, by  Eq. (\ref{inverse_QFT}) and step 4 in Algorithm \ref{alg:PE}, then $\widetilde{\omega} $ is a perfect estimate of $\omega$. However, sometimes $\omega$ is not approximated by $\dfrac{\widetilde{\omega}}{2^t}$, but by $1-\dfrac{\widetilde{\omega}}{2^t}$. For example, if the binary representation of $\omega$ is $\omega=0.11\cdots1$ (sufficiently many 1s), we will  get the measurement $00\cdots 0$ with high probability, because at this point $e^{2\pi i \omega}$ is close to $e^{2\pi i 0}=1$.
\end{Remark}

\subsection{Quantum algorithm for discrete logarithm problem}
In this subsection, we review Shor's quantum algorithm for discrete logarithm problem \cite{nielsen2000quantum, kaye2006introduction}. The discrete logarithm problem is described below.

\begin{mytcbprob*}{discrete logarithm problem}
\textbf{Input:} A positive integer $N$, $\mathbb{Z}^*_N=\{x| 0\leq x< N, gcd(x,N)=1\}$, $a\in\mathbb{Z}^*_N$, $b\in\mathbb{Z}^*_N$, and the least positive integer $r$ that satisfies $a^r\equiv 1 (\bmod\ N) $.

\textbf{Promise:} $b\equiv a^g (\bmod\ N)$, $g\in\{0,1, \cdots,r-1\}$.

\textbf{Output:} The integer $g$.
\end{mytcbprob*}

For convenience and clarity in analyzing the algorithm, we assume that $r$ is prime and $r>2$. When $r=2$, since $g\in\{0,1\}$, it is only necessary to verify that  $b\equiv a^g (\bmod\ N)$ holds for  $g=0$ or $g=1$ in order to compute $g$. With some more sophisticated analysis, it can be verified that the  quantum algorithm for discrete logarithm problem is in fact also applicable to the case where $r$ is a composite number \cite{kaye2006introduction}. 

The important unitary operators  $M_a$  and $M_b$  in Shor's quantum algorithm for discrete logarithm problem are defined as 
\begin{align}
M_a|x \rangle=|ax\ (\bmod\ N)\rangle \text{,}\\
M_b|x \rangle=|bx\ (\bmod\ N)\rangle \text{.}
\end{align}

Denote
\begin{equation}
|u_s\rangle=\dfrac{1}{\sqrt{r}}\sum\limits_{k=0}^{r-1}e^{-2\pi i\frac{s}{r}k}|a^k (\bmod\ N)\rangle,
\end{equation}
where $s\in\{0,1, \cdots,r-1\}$.

We have
\begin{align}
M_a|u_s\rangle=e^{2\pi i\frac{s}{r}} |u_s\rangle,\label{M_a}\\
M_b|u_s\rangle=e^{2\pi i\frac{sg}{r}} |u_s\rangle,\label{M_b}\\
\dfrac{1}{\sqrt{r}}\sum\limits_{s=0}^{r-1}|u_s\rangle=|1\rangle,
\end{align} and
\begin{equation}\label{us_orthonormal}
\langle u_s|u_{s'}\rangle=
\begin{cases} 0 &\text{if $s\not= s'$},\\
1 &\text{if $s=s'$}.
\end{cases}
\end{equation}

In addition, we define the controlled  operator $C^*_m(U)$ as follows:
\begin{equation}
C^*_t(U)|j \rangle|k \rangle|u\rangle=|j\rangle|k\rangle U^j|u\rangle\text{,}
\end{equation}
where $t\in \mathbb{N}^+$, $j$ and $k$ are both $t$-bit strings, $U$ is a unitary operator, and $\ket{u}$ is a quantum state. 

Algorithm \ref{alg:DLQA}   and Fig. \ref{fig:discrete logarithmic quantum algorithm} show the procedure of Shor's  quantum algorithm  \cite{nielsen2000quantum}.

\vspace{1em}

\begin{breakablealgorithm}
\caption{Quantum algorithm for  discrete logarithm problem}
\label{alg:DLQA}
\begin{algorithmic}[1]
\noindent\textbf{Input}: A positive integer $N$, $\mathbb{Z}^*_N=\{x| 0\leq x< N, gcd(x,N)=1\}$, $a\in\mathbb{Z}^*_N$, $b\in\mathbb{Z}^*_N$, and  the least positive integer $r$ that satisfies $a^r\equiv 1 (\bmod\ N) $, where $b\equiv  a^g (\bmod\ N)$, $g\in\{0,1,\cdots,r-1\}$.

\noindent\textbf{Output}: The integer $\hat{g}$ that satisfies $b\equiv  a^{\hat{g}} (\bmod\ N)$.

\noindent\textbf{Procedure}:
\State Create initial state $|0\rangle^{\otimes t}|0\rangle^{\otimes t}|1\rangle$:

$t=\lceil\log_2r+1\rceil+\left\lceil\log_2\left(2+\dfrac{1}{\epsilon}\right)\right\rceil$, the third register is  $L$-qubit, and $L=\left\lfloor\log_2 N\right\rfloor+1$.
\State Apply $H^{\otimes t}$ to the first  and the second registers:

$\left(H^{\otimes t}\otimes H^{\otimes t}\otimes I^{\otimes L}\right)|0\rangle^{\otimes t}|0\rangle^{\otimes t}|1\rangle=  \dfrac{1}{\sqrt{2^t}}\sum\limits_{j=0}^{2^t-1}|j\rangle\dfrac{1}{\sqrt{2^t}}\sum\limits_{k=0}^{2^t-1}|k\rangle|1\rangle$.

\State Apply $C^*_t(M_a)$ to the first  and the third registers, and $C_t(M_b)$ to the second and third registers:
\begin{align*}
& \left(\left(I^{\otimes t}\otimes C_{t}\left(M_b\right)\right)C^*_{t}\left(M_a\right)\right)\dfrac{1}{\sqrt{2^t}}\sum\limits_{j=0}^{2^t-1}|j\rangle\dfrac{1}{\sqrt{2^t}}\sum\limits_{k=0}^{2^t-1}|k\rangle|1\rangle \\=&\left(\left(I^{\otimes t}\otimes C_{t}\left(M_b\right)\right)C^*_{t}\left(M_a\right)\right)\dfrac{1}{\sqrt{2^t}}\sum\limits_{j=0}^{2^t-1}|j\rangle\dfrac{1}{\sqrt{2^t}}\sum\limits_{k=0}^{2^t-1}|k\rangle\left(\dfrac{1}{\sqrt{r}}\sum\limits_{s=0}^{r-1}|u_s\rangle\right)
 \\=&\dfrac{1}{\sqrt{r}}\sum\limits_{s=0}^{r-1} \dfrac{1}{\sqrt{2^{t}}}\sum\limits_{j=0}^{2^{t}-1}e^{2\pi ij\frac{s}{r}}|j\rangle  \dfrac{1}{\sqrt{2^{t}}}\sum\limits_{k=0}^{2^{t}-1}e^{2\pi ik\frac{sg}{r}}|k\rangle  |u_s\rangle.
\end{align*}
 
\State Apply $QFT^{-1}$ to the first  and the second registers:
\begin{align*}
&\left(QFT^{-1}\otimes QFT^{-1}\otimes I^{\otimes L}\right)\dfrac{1}{\sqrt{r}}\sum\limits_{s=0}^{r-1} \dfrac{1}{\sqrt{2^{t}}}\sum\limits_{j=0}^{2^{t}-1}e^{2\pi ij\frac{s}{r}}|j\rangle  \dfrac{1}{\sqrt{2^{t}}}\sum\limits_{k=0}^{2^{t}-1}e^{2\pi ik\frac{sg}{r}}|k\rangle  |u_s\rangle \\=&  \dfrac{1}{\sqrt{r}}\sum\limits_{s=0}^{r-1} \dfrac{1}{2^{t}}\sum\limits_{j=0}^{2^{t}-1}\sum\limits_{j'=0}^{2^{t}-1}e^{2\pi ij\left(\frac{s}{r}-\frac{j'}{2^t}\right)}|j'\rangle \dfrac{1}{2^{t}}\sum\limits_{k=0}^{2^{t}-1}\sum\limits_{k'=0}^{2^{t}-1}e^{2\pi ik\left(\frac{sg}{r}-\frac{k'}{2^t}\right)}|k'\rangle |u_s\rangle.
 \end{align*}
 
\State Measure the first and the second register:

obtain a $t$-bit string $m_a$ and  a $t$-bit string $m_b$.
\State Round $m_ar/2^t$ and $m_br/2^t$ to get integers $\hat{m}_a$ and $\hat{m}_b$ respectively. If $\hat{m}_a=0$, then go to step 1; otherwise, compute $\hat{g}=\hat{m}_a^{-1}\hat{m}_b (\bmod\ r)$. If
$b\equiv a^{\hat{g}} (\bmod\ N)$, then output $\hat{g}$, otherwise  go to step 1.
\end{algorithmic}
\end{breakablealgorithm}

\begin{figure}[H]
	\centering
	\includegraphics[width=0.5\textwidth]{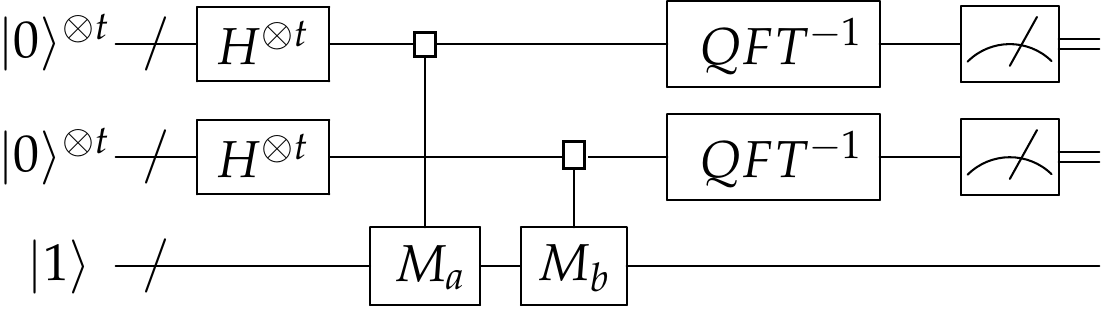}
	\caption{\label{fig:discrete logarithmic quantum algorithm} Circuit for Shor's quantum algorithm.}
\end{figure}

	The function of the quantum part of Algorithm \ref{alg:DLQA} (step 1 to 5) and the classical part of Algorithm \ref{alg:DLQA} (step 6) can be described by the following proposition.
	
\begin{Proposition}[See \cite{nielsen2000quantum,kaye2006introduction}]\label{DLQAResult}
In step 5 of Algorithm \ref{alg:DLQA},  the probability that $m_a$ and $m_b$ satisfy
\begin{equation*}
\left(\left|\dfrac{m_a}{2^{t}}-\dfrac{s}{r}\right|\leq2^{-\lceil\log_2r+1\rceil}\right)\bigcap \left(\left|\dfrac{m_b}{2^{t}}-\dfrac{sg(\bmod\ r)}{r}\right|\leq2^{-\lceil\log_2r+1\rceil}\right)
\end{equation*}
 is at least $1-\epsilon$, where $s\in \{0,1,\cdots,r-1\}$.

In step 6 of Algorithm \ref{alg:DLQA}, the probability that $b\equiv a^{\hat{g}} (\bmod\ N)$ holds is at least $\dfrac{r-1}{r}(1-\epsilon)$.
\end{Proposition}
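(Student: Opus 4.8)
The plan is to handle the quantum part (steps~1--5) and the classical post-processing (step~6) separately. For the quantum part, the observation is that each of the first two registers is running an instance of the phase estimation routine of Algorithm~\ref{alg:PE}, so Propositions~\ref{phase_estimation_result} and~\ref{phase_estimation2} apply to each of them; for the classical part, I would check that the closeness bounds asserted for $m_a,m_b$ are tight enough that the rounding and modular-inverse operations recover $g$ exactly whenever $s\neq 0$.

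To make the reduction precise: since $|1\rangle=\frac{1}{\sqrt{r}}\sum_{s=0}^{r-1}|u_s\rangle$, the operators $M_a,M_b$ act on $|u_s\rangle$ by the eigenvalue relations~\eqref{M_a}--\eqref{M_b}, and $e^{2\pi i k\, sg/r}$ depends on $sg$ only through $sg\bmod r$, the state after step~4 equals $\frac{1}{\sqrt{r}}\sum_{s=0}^{r-1}|\psi_s^{(a)}\rangle|\psi_s^{(b)}\rangle|u_s\rangle$, where $|\psi_s^{(a)}\rangle$ is exactly the state produced by Algorithm~\ref{alg:PE} on eigenphase $\omega=s/r$ and $|\psi_s^{(b)}\rangle$ the one on eigenphase $\omega=(sg\bmod r)/r$. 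By the orthonormality~\eqref{us_orthonormal}, measuring the first two registers in step~5 is statistically identical to drawing $s$ uniformly from $\{0,\dots,r-1\}$ and then sampling $m_a$ and $m_b$ independently from the two corresponding phase-estimation output distributions. With $n:=\lceil\log_2 r+1\rceil$, the prescribed $t=n+\lceil\log_2(2+\tfrac{1}{\epsilon})\rceil$ is precisely the value required by Proposition~\ref{phase_estimation_result} with error parameter $\epsilon/2$; hence, conditioned on any fixed $s$, the estimate in each register lies within the prescribed accuracy of the corresponding eigenphase with probability at least $1-\epsilon/2$. Translating this into the displayed absolute-value bounds and taking a union bound over the two registers yields the displayed joint event with probability at least $1-\epsilon$ for every fixed $s$ (and therefore also unconditionally), which proves the first assertion.

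For step~6, suppose the joint closeness event holds for the realized value of $s$. Since $r$ is an odd prime, $2r$ is not a power of two, so $2^{\lceil\log_2 r+1\rceil}>2r$ and therefore $\bigl|\tfrac{m_a r}{2^t}-s\bigr|<\tfrac12$ and $\bigl|\tfrac{m_b r}{2^t}-(sg\bmod r)\bigr|<\tfrac12$; thus the rounding in step~6 returns $\hat m_a=s$ and $\hat m_b=sg\bmod r$. If moreover $s\neq0$, which happens for exactly $r-1$ of the $r$ equiprobable values of $s$, then $s$ is invertible modulo the prime $r$, so $\hat g=\hat m_a^{-1}\hat m_b\equiv s^{-1}(sg)\equiv g\pmod r$; as $g\in\{0,\dots,r-1\}$ this forces $\hat g=g$, whence $b\equiv a^g=a^{\hat g}\pmod N$ and the algorithm outputs the correct value. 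Combining with the per-$s$ bound of the previous paragraph, the per-round success probability is at least $\sum_{s=1}^{r-1}\frac{1}{r}\Pr[\text{closeness}\mid s]\ge\frac{r-1}{r}(1-\epsilon)$.

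The step I expect to be the main obstacle is the precise translation between the circular distance $d_t(\cdot,\cdot)$ guaranteed by phase estimation and the absolute-value closeness statement of the proposition, including the $2^{-t}$ discretization of the eigenphase: when $s/r$ is close to $1$ (for instance $s=r-1$) the estimate $m_a$ can ``wrap around'' near $2^t$ instead of near $0$, and one must verify that in that case the rounding step still returns either the correct value or a value that triggers a harmless restart via the ``$\hat m_a=0$'' branch, so that the $\frac{r-1}{r}(1-\epsilon)$ bound is not diminished; this is the bookkeeping carried out in~\cite{nielsen2000quantum,kaye2006introduction}. The remaining details---the exact split of the error budget between the two registers and the arithmetic with ceilings---are routine.
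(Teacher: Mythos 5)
Your proposal is correct and follows essentially the same route as the paper's own proof: use the orthonormality of the $|u_s\rangle$ to decompose the post-step-4 state into a uniform mixture over $s$ of two independent phase estimations with eigenphases $s/r$ and $(sg\bmod r)/r$, apply Proposition~\ref{phase_estimation_result} (equivalently Proposition~\ref{phase_estimation2}) with error budget $\epsilon/2$ per register, and then the rounding/modular-inverse argument for $s\neq 0$ giving the factor $\frac{r-1}{r}$. The one step you defer---converting the circular-distance guarantee $d_t(\cdot,\cdot)$ into the stated absolute-value bound---is exactly the bookkeeping the paper spells out (treating $s=0$ separately and noting that for $s\neq 0$ one has $2^{-t}<\frac{1}{r}\le\frac{s}{r}\le\frac{r-1}{r}<1-2^{-t}$, so the truncated phase is away from $00\cdots0$ and $11\cdots1$ and no wrap-around occurs), so nothing essential is missing.
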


\section{Distributed quantum algorithm for  discrete logarithm problem} \label{sec:distributed discrete logarithmic quantum algorithm}
		
	In 2017, Li and Qiu et al. \cite{li2017application} proposed a distributed phase estimation algorithm. In 2023, Xiao and Qiu et al. \cite{Xiao2023DQAShor} proposed a distributed Shor's algorithm with two compute nodes by combining the classical error correction technique. Furthermore, Xiao and Qiu et al. proposed a novel  distributed phase estimation algorithm and distributed Shor's algorithm with multiple compute nodes \cite{Xiao2023DQAkShor}.	
	
	Drawing on the ideas and methods of the distributed phase estimation algorithm and distributed Shor's algorithm  proposed by Xiao and Qiu et al. \cite{Xiao2023DQAkShor}, we  design a distributed quantum algorithm for   discrete logarithm problem. 
	
\subsection{Algorithm design}
	
In the following, we describe the  designing idea of our   distributed quantum algorithm.

Let $l_1, l_2, \cdots,l_{k+1}$ satisfy 
\begin{equation}
1=l_1<l_2<\cdots <l_{k+1}=\lceil\log_2r+1\rceil+1, 
\end{equation}
where 
 $l_i=\left\lfloor (i-1)\cdot\dfrac{\lceil\log_2r+1\rceil+1}{k}\right\rfloor$ and  $i=2,\cdots,k$.
 
We use $k$ compute nodes (denoted as $T_1, T_2, \cdots,T_k$) to estimate the bits of different parts of $s/r$ and $sg(\bmod\ r)/r$ respectively, where  $T_i$ estimates ${(s/r)}_{\{l_i,l_{i+1}+h\}}$ and ${(sg(\bmod\ r)/r)}_{\{l_i,l_{i+1}+h\}}$, where $s\in\{0,1,\cdots, r\}$, $i=1,2,\cdots,k-1$ and $2\leq h\leq \left\lfloor \dfrac{\lceil\log_2r+1\rceil+1}{k}\right\rfloor$. $T_{k+1}$ estimates ${(s/r)}_{\{l_k,l_{k+1}\}}$ and ${(sg(\bmod\ r)/r)}_{\{l_k,l_{k+1}\}}$. (shown in Fig. \ref{DDLPAestimation}). 

\begin{figure}[H]
	\centering
	\includegraphics[width=0.9\textwidth]{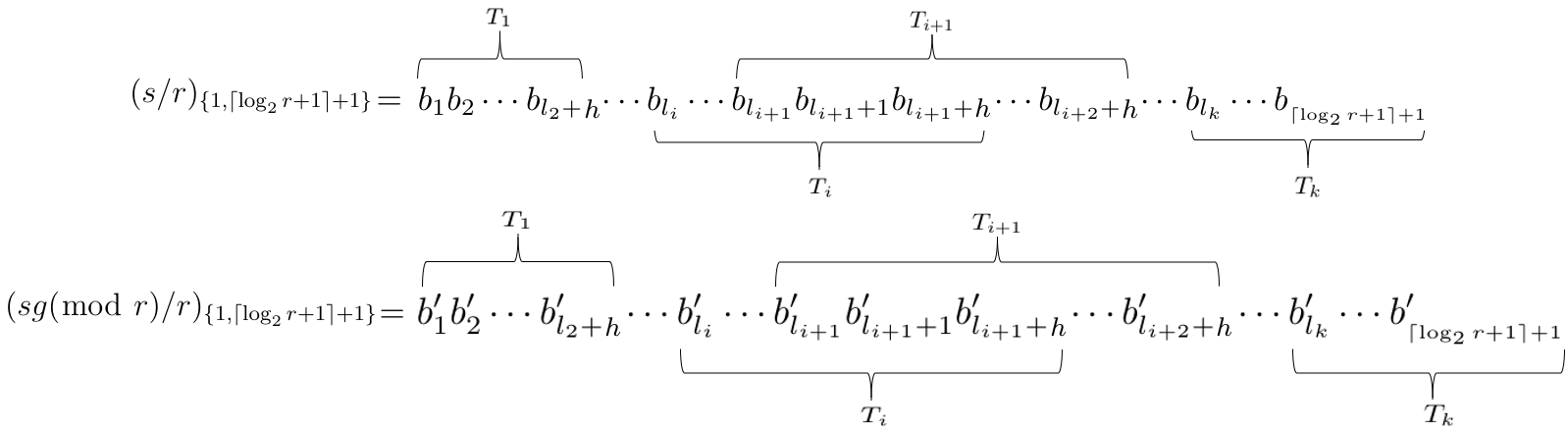}
	\caption{\label{DDLPAestimation} The positions of bits estimated by each compute node (Algorithm \ref{alg:DDLQA}).}
\end{figure}

We utilize the following technique to estimate the bits of different parts of $s/r$ and $sg(\bmod\ r)/r$. For any real number $\omega$, suppose its binary representation is $\omega=a_{1}a_{2}\cdots a_l.b_1b_2\cdots$, and denote 
\begin{equation}
\omega_{\{i,+\infty\}}=0.b_ib_{i+1}\cdots, 
\end{equation}
where $i\in \mathbb{N}^+$.

Based on $U|u\rangle=e^{2\pi i\omega}|u\rangle$. Note that 
\begin{align}
U^{2^{n-1}}|u\rangle=&e^{2\pi i\left(2^{n-1}\omega\right)}|u\rangle
=e^{2\pi i\omega_{\{n,+\infty\}}}|u\rangle, 
\end{align}
where $n\in\mathbb{N}^+$.

Thus, if estimating $\omega_{\{n,+\infty\}}$, we can apply the phase estimation algorithm similarly and change $C_t(U)$ in step 3 of Algorithm \ref{alg:PE} to $C_t\left(U^{2^{n-1}}\right)$ accordingly \cite{li2017application}.

The corresponding positions of the first $(h+1)$-bits of $T_{i+1}$'s estimation and the last $(h+1)$-bits of $T_{i}$'s estimation are overlapping with each other. Thus, we can use the bits with overlapped positions to correct the estimation results and eventually concatenate all the estimates.

Before presenting our  algorithm, we introduce the subroutine used to perform error correction and concatenation of the measurements, i.e., Algorithm \ref{alg:CorrectAndCombine}.  
For any bit string $x$ and integer $b$,  we denote  $Sum(x,b)$  as follows:
\begin{itemize}
\item  $Sum(x,b)$ is a bit string $y$ of length $|x|$, where $\left(x+b\right)\bmod 2^{|x|}=y$.
\end{itemize}

\vspace{0.5em}

\begin{breakablealgorithm}
\caption{Correct}
\label{alg:CorrectAndCombine}
\begin{algorithmic}[1]
\noindent\textbf{Input}: $k$ bit strings $m_1,\cdots,m_k$.

\noindent\textbf{Output}: The bit string $c_1$.

\noindent\textbf{Procedure}:
\State Set $c_k=m_k$.
\For{$j=k-1$ to $1$}
	\State Choose $q_j\in\{0, \pm 1,\pm 2,\cdots,\pm 2^{h-1}\}$ such that\\
	\quad\quad  $Sum\left(\left(m_j\right)_{\left[l_{j+1},l_{j+1}+h\right]},q_j\right)=$$\left(c_{j+1}\right)_{[1,h+1]}$.
	\State $p_j\leftarrow Sum\left(m_j,q_j\right)$.
	\State $c_j\leftarrow p_j\circ \left(c_{j+1}\right)_{\left[h+2,\left|c_{j+1}\right|\right]}$ (``$\circ$" represents catenation).
\EndFor
\end{algorithmic}
\end{breakablealgorithm}

\vspace{0.5em}

The analysis of the correctness and effects of Algorithm \ref{alg:CorrectAndCombine} are given in Proposition \ref{prop:CorrectAndCombine} in Section \ref{DDLQA Algorithm correctness analysis}.

The distributed quantum algorithm is shown in Algorithm \ref{alg:DDLQA}. Fig. \ref{fig:distributed discrete logarithmic quantum algorithm} shows the quantum circuit of Algorithm \ref{alg:DDLQA}. 
To describe  Algorithm \ref{alg:DDLQA} succinctly, we denote 
\begin{itemize}
\item $|\psi_{t,\omega}\rangle=\dfrac{1}{2^t}\sum\limits_{u=0}^{2^t-1}\sum\limits_{v=0}^{2^t-1}e^{2\pi iu(\omega-v/2^t)}|v\rangle$, where $t\in \mathbb{N}^+$ and $\omega \in[0,1)$.
\end{itemize}

\vspace{2em}

\begin{breakablealgorithm}
\caption{Distributed quantum algorithm for  discrete logarithm problem}
\label{alg:DDLQA}
\begin{algorithmic}[1]
\noindent\textbf{Input}: A positive integer $N$, $\mathbb{Z}^*_N=\{x| 0\leq x< N, gcd(x,N)=1\}$, $a\in\mathbb{Z}^*_N$, $b\in\mathbb{Z}^*_N$, and  the least positive integer $r$ that satisfies $a^r\equiv 1 (\bmod\ N) $, where $b\equiv  a^g (\bmod\ N)$, $g\in\{0,1,\cdots,r-1\}$.

\noindent\textbf{Output}: The integer $\hat{g}$ that satisfies $b\equiv  a^{\hat{g}} (\bmod\ N)$.

\noindent\textbf{Procedure}:
\State Node $T_1,\cdots,T_k$ creates initial state $|0\rangle^{\otimes t_1}_{1a}|0\rangle^{\otimes t_1}_{1b}|1\rangle_C$, $\cdots,|0\rangle^{\otimes t_k}_{ka}|0\rangle^{\otimes t_k}_{kb}$ respectively, where  register ${ja}$ and register ${jb}$ are both $t_j$-qubit, $t_j=l_{j+1}+3-l_j+\left\lceil\log_2\left(2+\dfrac{k}{\epsilon'}\right)\right\rceil$, $j=1,\cdots,k-1$,  $t_k=l_{k+1}+1-l_k+\left\lceil\log_2\left(2+\dfrac{k}{\epsilon'}\right)\right\rceil$, $0<\epsilon'<\epsilon<1$, register $C$ is  $L$-qubit, and $L=\left\lfloor\log_2 N\right\rfloor+1$.

Denote 
\begin{align}
\ket{\phi_1}=|0\rangle^{\otimes t_1}_{1a}|0\rangle^{\otimes t_1}_{1b}|1\rangle_C\cdots|0\rangle^{\otimes t_k}_{ka}|0\rangle^{\otimes t_k}_{kb}.\nonumber
\end{align}

\State Set $u=1$.

\noindent\textbf{Node  $T_u$ excutes}:
\State Apply $H^{\otimes t_u}$ to registers ${ua}$ and ${ub}$ of $\ket{\phi_u}$:
\begin{align}
\ket{\phi_{u+1}}=\left(I^{\otimes 2\sum\nolimits_{i=1}^{u-1}t_i}\otimes \left(H^{\otimes t_u}\otimes H^{\otimes t_u}\otimes I^{\otimes L}\right) \otimes  I^{\otimes 2\sum\nolimits_{i=u+1}^{k}t_i}\right)\ket{\phi_u}.\nonumber
\end{align}

\State Apply $C^*_{t_u}\left(M_a^{2^{l_u-1}}\right)$ to registers ${ua}$ and $C$ of  $\ket{\phi_{u+1}}$, and $C_{t_u}(M_b^{2^{l_u-1}})$ to registers ${ub}$ and $C$ of  $\ket{\phi_{u+1}}$:
\begin{align}
\ket{\phi_{u+2}}=\left(I^{\otimes 2\sum\nolimits_{i=1}^{u-1}t_i}\otimes \left(\left(I^{\otimes t_u}\otimes C_{t_u}\left(M_b^{2^{l_u-1}}\right)\right)C^*_{t_u}\left(M_a^{2^{l_u-1}}\right)\right) \otimes I^{\otimes 2\sum\nolimits_{i=u+1}^{k}t_i}\right)\ket{\phi_{u+1}}.\nonumber
\end{align}

\State  Apply $QFT^{-1}$ to registers ${ua}$ and ${ub}$ of  $\ket{\phi_{u+2}}$:
\begin{align}
\ket{\phi_{u+3}}=\left(I^{\otimes 2\sum\nolimits_{i=1}^{u-1}t_i}\otimes \left(QFT^{-1}\otimes QFT^{-1}\otimes I^{\otimes L}\right) \otimes  I^{\otimes 2\sum\nolimits_{i=u+1}^{k}t_i}\right)\ket{\phi_{u+2}}.\nonumber
\end{align}

\State If $u<k$, then teleport the qubits of register $C$ to node $T_{u+1}$. Denote this quantum state as  $\ket{\phi_{u+4}}$. Set $u\leftarrow u+1$ and go to step 3.

\noindent\textbf{Finally}:

\State Node $T_j$ measures the first $l_{j+1}+h+1-l_j$ qubits of its register ${ja}$ and register ${jb}$, and denotes the measurement results as $m_{ja}$ and $m_{jb}$, respectively, where $2\leq h\leq \left\lfloor \dfrac{\lceil\log_2r+1\rceil+1}{k}\right\rfloor$, $j=1,\cdots,k-1$. Node $T_k$ measures the first $l_{k+1}+1-l_k$ qubits of its register ${ka}$ and register ${kb}$, and denotes the measurement results as $m_{ka}$ and $m_{kb}$, respectively.

\State $m_a\leftarrow Correct(m_{1a},\cdots,m_{ka})$: $m_a$ is an $\lceil\log_2r+1\rceil$+1-bit string.

\noindent$m_b\leftarrow Correct(m_{1b},\cdots,m_{kb})$: $m_b$ is an $\lceil\log_2r+1\rceil$+1-bit string.

\State Round $m_ar/2^{|m_a|}$ and $m_br/2^{|m_b|}$ to get integers $\hat{m}_a$ and $\hat{m}_b$ respectively. If $\hat{m}_a=0$, then go to step 1; otherwise, compute $\hat{g}=\hat{m}_a^{-1}\hat{m}_b (\bmod\ r)$. If
$b\equiv a^{\hat{g}} (\bmod\ N)$, then output $\hat{g}$, otherwise  go to step 1.
\end{algorithmic}
\end{breakablealgorithm}

\begin{figure}[h]
	\centering
	\includegraphics[width=\textwidth]{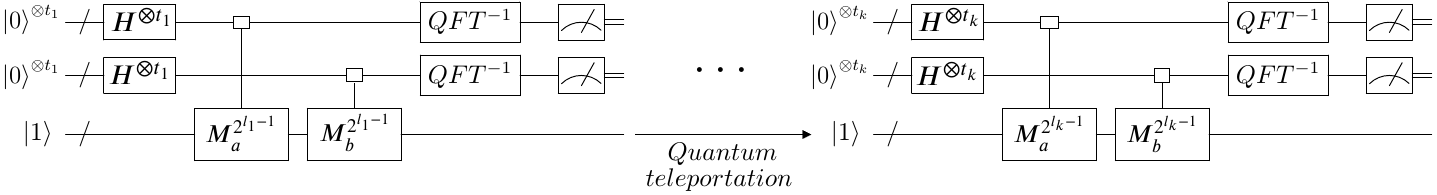}
	\caption{\label{fig:distributed discrete logarithmic quantum algorithm} Circuit for  distributed quantum algorithm.}
\end{figure}

\subsection{Algorithm  correctness analysis}\label{DDLQA Algorithm correctness analysis}
In the following, we analyze the correctness of Algorithm \ref{alg:DDLQA}. We first give a lemma related to proving the correctness of  Algorithm \ref{alg:DDLQA}.

\begin{Lemma}\label{psifinal}
The quantum state at step 7 of Algorithm \ref{alg:DDLQA} is
\begin{equation}
\ket{\Phi}=\dfrac{1}{\sqrt{r}}\sum\limits_{s=0}^{r-1}\left(\mathop{\otimes}\limits_{j=1}^{k}\Ket{\psi_{t_j,\left(\frac{s}{r}\right)_{\left\{l_j,+\infty\right\}}}}_{ja}\Ket{\psi_{t_j,\left(\frac{sg}{r}\right)_{\left\{l_j,+\infty\right\}}}}_{{jb}}\right) |u_{s}\rangle_C.
\end{equation}
\end{Lemma}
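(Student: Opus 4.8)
The plan is to prove the identity by induction on the node index $u$, following the global state as the nodes $T_1,\dots,T_k$ successively run steps~3--6 of Algorithm~\ref{alg:DDLQA}. The invariant I will carry is: just after node $T_u$ has finished its block of operations, the joint state equals
\begin{equation*}
\frac{1}{\sqrt r}\sum_{s=0}^{r-1}\left(\mathop{\otimes}\limits_{j=1}^{u}\Ket{\psi_{t_j,\left(\frac{s}{r}\right)_{\{l_j,+\infty\}}}}_{ja}\Ket{\psi_{t_j,\left(\frac{sg}{r}\right)_{\{l_j,+\infty\}}}}_{jb}\right)\left(\mathop{\otimes}\limits_{j=u+1}^{k}|0\rangle^{\otimes t_j}_{ja}|0\rangle^{\otimes t_j}_{jb}\right)|u_s\rangle_C ,
\end{equation*}
so that the case $u=k$ is precisely $\ket{\Phi}$. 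For the base case $u=0$, i.e.\ the state $\ket{\phi_1}$ created in step~1, I substitute $|1\rangle_C=\frac{1}{\sqrt r}\sum_{s=0}^{r-1}|u_s\rangle_C$ (one of the properties of the states $|u_s\rangle$ recalled in Section~\ref{sec:preliminaries}); this is the displayed state with an empty tensor product over $j$.

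For the inductive step I fix $s$ and track a single branch $|u_s\rangle_C$, which is legitimate since all the operations are linear; I will also use that the gates applied by $T_u$ act nontrivially only on registers $ua$, $ub$ and $C$, so the already-finished registers $1a,\dots,(u-1)b$ are left alone and the not-yet-touched registers $(u+1)a,\dots,kb$ remain in $|0\rangle$. Step~3 sends $|0\rangle^{\otimes t_u}_{ua}|0\rangle^{\otimes t_u}_{ub}$ to $\frac{1}{2^{t_u}}\sum_{x=0}^{2^{t_u}-1}\sum_{y=0}^{2^{t_u}-1}|x\rangle_{ua}|y\rangle_{ub}$. Step~4 applies $C^*_{t_u}\!\big(M_a^{2^{l_u-1}}\big)$ (acting as the identity on the spectator register $ub$) and $C_{t_u}\!\big(M_b^{2^{l_u-1}}\big)$; since $|u_s\rangle$ is a common eigenvector of $M_a$ and $M_b$ by Eqs.~(\ref{M_a})--(\ref{M_b}), these only deposit phases, and using the elementary fact $e^{2\pi i\,2^{n-1}\omega}=e^{2\pi i\,\omega_{\{n,+\infty\}}}$ (noted in the design discussion of Algorithm~\ref{alg:DDLQA}) with $\omega=s/r$ and $\omega=sg/r$ I obtain $\big(M_a^{2^{l_u-1}}\big)^{x}|u_s\rangle=e^{2\pi i x(s/r)_{\{l_u,+\infty\}}}|u_s\rangle$ and $\big(M_b^{2^{l_u-1}}\big)^{y}|u_s\rangle=e^{2\pi i y(sg/r)_{\{l_u,+\infty\}}}|u_s\rangle$, so the $ua,ub,C$ part of the branch becomes $\frac{1}{2^{t_u}}\sum_{x,y}e^{2\pi i\left(x(s/r)_{\{l_u,+\infty\}}+y(sg/r)_{\{l_u,+\infty\}}\right)}|x\rangle_{ua}|y\rangle_{ub}|u_s\rangle_C$; this is exactly where the powers $2^{l_u-1}$ chosen in Algorithm~\ref{alg:DDLQA} enter. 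Step~5 applies $QFT^{-1}$ to $ua$ and to $ub$; by Eq.~(\ref{inverse_QFT}) and the definition of $|\psi_{t,\omega}\rangle$, the $ua$ factor $\frac{1}{\sqrt{2^{t_u}}}\sum_x e^{2\pi i x(s/r)_{\{l_u,+\infty\}}}|x\rangle$ becomes $\Ket{\psi_{t_u,(s/r)_{\{l_u,+\infty\}}}}_{ua}$, and similarly for $ub$. Summing back over $s$, the invariant now holds for $u$ in place of $u-1$, with the one caveat that register $C$ is physically still at $T_u$; step~6 teleports it to $T_{u+1}$, and since teleportation is a faithful state transfer the displayed expression is unchanged (only the holder of $C$ changes). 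This closes the induction, and $u=k$ gives Lemma~\ref{psifinal}.

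I do not expect a genuine mathematical difficulty here: the argument is a structured unwinding of Algorithm~\ref{alg:DDLQA} using only Eqs.~(\ref{M_a})--(\ref{M_b}), the expansion of $|1\rangle_C$, and the behaviour of $QFT^{-1}$. The main points to be careful about are (i) the tensor-factor bookkeeping --- in particular justifying that the gates of $T_u$ really do act as the identity on every register except $ua$, $ub$, $C$, so that the remaining registers stay in $|0\rangle$ until their own node processes them --- and (ii) the phase identity $e^{2\pi i\,2^{l_u-1}s/r}=e^{2\pi i\,(s/r)_{\{l_u,+\infty\}}}$ and its $M_b$ counterpart, which is what makes each node's registers pick up the correct ``windowed'' portion of the binary expansions of $s/r$ and $sg/r$.
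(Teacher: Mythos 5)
Your proposal is correct and follows essentially the same route as the paper: the paper's proof likewise unrolls steps 3--6 for a generic node $T_u$ (using $|1\rangle_C=\frac{1}{\sqrt r}\sum_s|u_s\rangle_C$, the eigenvalue relations for $M_a^{2^{l_u-1}}$ and $M_b^{2^{l_u-1}}$ on each $|u_s\rangle$ branch, and the definition of $|\psi_{t,\omega}\rangle$ after $QFT^{-1}$), obtaining the same intermediate states $\ket{\phi_{u+1}},\dots,\ket{\phi_{u+4}}$ and concluding at $u=k$. Your explicit inductive invariant and the remark that teleportation only relocates register $C$ are just a slightly more formal presentation of the same argument.
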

\begin{proof}
First, in step 3 of Algorithm \ref{alg:DDLQA}, we have
\begin{align}
&\ket{\phi_{u+1}}\\=&\left(I^{\otimes 2\sum\nolimits_{i=1}^{u-1}t_i}\otimes (H^{\otimes t_u}\otimes H^{\otimes t_u}\otimes I^{\otimes L}) \otimes  I^{\otimes 2\sum\nolimits_{i=u+1}^{k}t_i}\right)\ket{\phi_u}
\\=&\dfrac{1}{\sqrt{r}}\sum\limits_{s=0}^{r-1}\left(\mathop{\otimes}\limits_{j=1}^{u-1}\Ket{\psi_{t_j,\left(\frac{s}{r}\right)_{\left\{l_j,+\infty\right\}}}}_{{ja}}\Ket{\psi_{t_j,\left(\frac{sg}{r}\right)_{\left\{l_j,+\infty\right\}}}}_{{jb}}\right) H^{\otimes t_u}|0\rangle^{\otimes t_u}_{{ua}}H^{\otimes t_u}|0\rangle^{\otimes t_u}_{{ub}}|u_s\rangle_C\left(\mathop{\otimes}\limits_{j=u+1}^k|0\rangle^{\otimes t_j}_{{ja}}|0\rangle^{\otimes t_j}_{{jb}}\right)\\
=&\dfrac{1}{\sqrt{r}}\sum\limits_{s=0}^{r-1}\left(\mathop{\otimes}\limits_{j=1}^{u-1}\Ket{\psi_{t_j,\left(\frac{s}{r}\right)_{\left\{l_j,+\infty\right\}}}}_{{ja}}\Ket{\psi_{t_j,\left(\frac{sg}{r}\right)_{\left\{l_j,+\infty\right\}}}}_{{jb}}\right) \left(\dfrac{1}{\sqrt{2^{t_u}}}\sum\limits_{j=0}^{2^{t_u}-1}|j\rangle_{{ua}}\right)\left(\dfrac{1}{\sqrt{2^{t_u}}}\sum\limits_{j=0}^{2^{t_u}-1}|j\rangle_{{ub}}\right)|u_s\rangle_C\nonumber\\&\left(\mathop{\otimes}\limits_{j=u+1}^k|0\rangle^{\otimes t_j}_{{ja}}|0\rangle^{\otimes t_j}_{{jb}}\right).
\end{align}

Then, in step 4 of Algorithm \ref{alg:DDLQA}, we can get
\begin{align}
\ket{\phi_{u+2}}=&\left(I^{\otimes 2\sum\nolimits_{i=1}^{u-1}t_i}\otimes \left(\left(I^{\otimes t_u}\otimes C_{t_u}\left(M_b^{2^{l_u-1}}\right)\right)C^{*}_{t_u}\left(M_a^{2^{l_u-1}}\right)\right) \otimes I^{\otimes 2\sum\nolimits_{i=u+1}^{k}t_i}\right)\ket{\phi_{u+1}}\\
=&\dfrac{1}{\sqrt{r}}\sum\limits_{s=0}^{r-1}\left(\mathop{\otimes}\limits_{j=1}^{u-1}\Ket{\psi_{t_j,\left(\frac{s}{r}\right)_{\left\{l_j,+\infty\right\}}}}_{{ja}}\Ket{\psi_{t_j,\left(\frac{sg}{r}\right)_{\left\{l_j,+\infty\right\}}}}_{{jb}}\right) \left(\dfrac{1}{\sqrt{2^{t_u}}}\sum\limits_{j=0}^{2^{t_u}-1}e^{2\pi ij\left(\frac{s}{r}\right)_{\{l_u,+\infty\}}}|j\rangle_{{ua}}\right)\nonumber\\&\left(\dfrac{1}{\sqrt{2^{t_u}}}\sum\limits_{j=0}^{2^{t_u}-1}e^{2\pi ij\left(\frac{sg}{r}\right)_{\{l_u,+\infty\}}}|j\rangle_{{ub}}\right)|u_s\rangle_C\left(\mathop{\otimes}\limits_{j=u+1}^k|0\rangle^{\otimes t_j}_{{ja}}|0\rangle^{\otimes t_j}_{{jb}}\right).
\end{align}

After that, in step 5 of Algorithm \ref{alg:DDLQA}, the following state is obtained
\begin{align}
\ket{\phi_{u+3}}=&\left(I^{\otimes 2\sum\nolimits_{i=1}^{u-1}t_i}\otimes \left(QFT^{-1}\otimes QFT^{-1}\otimes I^{\otimes L}\right) \otimes  I^{\otimes 2\sum\nolimits_{i=u+1}^{k}t_i}\right)\ket{\phi_{u+2}}\\
=&\dfrac{1}{\sqrt{r}}\sum\limits_{s=0}^{r-1}\left(\mathop{\otimes}\limits_{j=1}^{u}\Ket{\psi_{t_j,\left(\frac{s}{r}\right)_{\left\{l_j,+\infty\right\}}}}_{{ja}}\Ket{\psi_{t_j,\left(\frac{sg}{r}\right)_{\left\{l_j,+\infty\right\}}}}_{{jb}}\right) |u_s\rangle_C\left(\mathop{\otimes}\limits_{j=u+1}^k|0\rangle^{\otimes t_j}_{{ja}}|0\rangle^{\otimes t_j}_{{jb}}\right).
\end{align}

Furthermore, in step 6 of Algorithm \ref{alg:DDLQA}, it leads to the
following state
\begin{align}
\ket{\phi_{u+4}}=&\dfrac{1}{\sqrt{r}}\sum\limits_{s=0}^{r-1}\left(\mathop{\otimes}\limits_{j=1}^{u}\Ket{\psi_{t_j,\left(\frac{s}{r}\right)_{\left\{l_j,+\infty\right\}}}}_{{ja}}\Ket{\psi_{t_j,\left(\frac{sg}{r}\right)_{\left\{l_j,+\infty\right\}}}}_{{jb}}\right)|0\rangle^{\otimes t_{u+1}}_{{ja}}|0\rangle^{\otimes t_{u+1}}_{{jb}}|u_s\rangle_C\left(\mathop{\otimes}\limits_{j=u+2}^k|0\rangle^{\otimes t_j}_{{ja}}|0\rangle^{\otimes t_j}_{{jb}}\right).
\end{align}

Thus, when $u=k$, we can obtain that the quantum state at step 7 of Algorithm \ref{alg:DDLQA} is
\begin{equation}
\ket{\Phi}=\dfrac{1}{\sqrt{r}}\sum\limits_{s=0}^{r-1}\left(\mathop{\otimes}\limits_{j=1}^{k}\Ket{\psi_{t_j,\left(\frac{s}{r}\right)_{\left\{l_j,+\infty\right\}}}}_{ja}\Ket{\psi_{t_j,\left(\frac{sg}{r}\right)_{\left\{l_j,+\infty\right\}}}}_{{jb}}\right) |u_{s}\rangle_C.
\end{equation}
\end{proof}

In the following, we  give another lemma related to proving the correctness of  Algorithm \ref{alg:DDLQA}.

\begin{Lemma}\label{distancestring}
In step 7 of Algorithm \ref{alg:DDLQA},  the probability that the measurements $m_{ja}$ and $m_{jb}$  satisfy
\begin{equation}
\begin{split}
&\left(d_{|m_{ja}|}\left(m_{ja},\left(\frac{s}{r}\right)_{\left\{l_j,l_{j+1}+h\right\}}\right)\leq 2^{h-2}\right)\bigcap\left(d_{|m_{jb}|}\left(m_{jb},\left(\frac{sg}{r}\right)_{\left\{l_j,l_{j+1}+h\right\}}\right)\leq 2^{h-2}\right)\\&\bigcap
\left(d_{|m_{ka}|}\left(m_{ka},\left(\frac{s}{r}\right)_{\left\{l_k,l_{k+1}\right\}}\right)\leq 1\right)\bigcap\left(d_{|m_{kb}|}\left(m_{kb},\left(\frac{sg}{r}\right)_{\left\{l_k,l_{k+1}\right\}}\right)\leq 1\right)
\end{split}
\end{equation}
for all $j\in\{1,\cdots,k-1\}$ is greater than $1-\epsilon$, where $s\in \{0,1,\cdots,r-1\}$.
\end{Lemma}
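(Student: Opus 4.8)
The plan is to reduce the statement to $2k$ applications of Proposition \ref{phase_estimation2}, one for each quantum register, via the standard projector technique together with the orthonormality of the states $\ket{u_s}$. First I would invoke Lemma \ref{psifinal}: the state just before the measurements of step 7 is $\ket{\Phi}$. Since the $\ket{u_s}_C$ are orthonormal by (\ref{us_orthonormal}), the reduced state on the registers $ja,jb$ ($j=1,\dots,k$) is the uniform mixture over $s\in\{0,\dots,r-1\}$ of the product states $\bigotimes_{j}\ket{\psi_{t_j,(s/r)_{\{l_j,+\infty\}}}}_{ja}\ket{\psi_{t_j,(sg/r)_{\{l_j,+\infty\}}}}_{jb}$; hence measuring those registers is equivalent to first drawing $s$ uniformly and then measuring the leading qubits of each register of the corresponding product state. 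Consequently it suffices to show that, conditioned on any fixed $s$, all $2k$ distance conditions in the lemma hold with probability at least $1-\epsilon'$, and then average over $s$.

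For a single register, say $ja$ with $j<k$, the relevant marginal is $\ket{\psi_{t_j,\omega}}$ with $\omega=(s/r)_{\{l_j,+\infty\}}\in[0,1)$, which is exactly the first-register state produced at step 4 of Algorithm \ref{alg:PE} when it is run with precision parameter $t=t_j$ on the phase $\omega$ (the eigenstate factor is irrelevant to the first-register statistics). Measuring only the leading $|m_{ja}|=l_{j+1}+h+1-l_j$ qubits is the same as measuring all $t_j$ qubits and discarding the rest, so $m_{ja}$ has the distribution of $\widetilde\omega_{[1,m]}$ with $m=l_{j+1}+h+1-l_j$. I would then apply Proposition \ref{phase_estimation2} with $n=l_{j+1}+3-l_j$ and error parameter $\tfrac{\epsilon'}{2k}$, checking the matching identities: $n+\lceil\log_2(2+\tfrac{k}{\epsilon'})\rceil=t_j$; $n\le m\le t_j$, where $n\le m$ is precisely the hypothesis $h\ge 2$; $\omega_{\{1,m\}}=(s/r)_{\{l_j,l_{j+1}+h\}}$, because the fractional part of $\omega$ is $(s/r)_{\{l_j,+\infty\}}$; and $2^{m-n}=2^{h-2}$. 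This gives $\Pr\!\big[\,d_{|m_{ja}|}(m_{ja},(s/r)_{\{l_j,l_{j+1}+h\}})\le 2^{h-2}\mid s\,\big]\ge 1-\tfrac{\epsilon'}{2k}$. The registers $jb$ are treated identically with $\omega=(sg/r)_{\{l_j,+\infty\}}$, and the two node-$k$ registers with $n=m=l_{k+1}+1-l_k$, so there the guaranteed bound is $2^{m-n}=1$.

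Finally, conditioned on a fixed $s$ the branch state is a tensor product over the $2k$ registers $1a,1b,\dots,ka,kb$, so a union bound over these registers shows that all $2k$ distance conditions hold simultaneously with probability at least $1-2k\cdot\tfrac{\epsilon'}{2k}=1-\epsilon'$; averaging $\frac1r\sum_{s}$ preserves this lower bound, and $1-\epsilon'>1-\epsilon$ because $\epsilon'<\epsilon$, which is the asserted inequality. I expect the only real work to be the parameter bookkeeping of the middle paragraph — aligning the leading bits $\omega_{\{1,m\}}$ of $(s/r)_{\{l_j,+\infty\}}$ with the target substring $(s/r)_{\{l_j,l_{j+1}+h\}}$ and verifying the precision identity $t_j=n+\lceil\log_2(2+k/\epsilon')\rceil$ separately for the nodes $j<k$ and for node $k$ — rather than anything conceptually subtle.
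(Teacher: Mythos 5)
Your proposal is correct and follows essentially the same route as the paper's proof: Lemma \ref{psifinal}, orthonormality of the $\ket{u_s}$ to split the probability over $s$, and per-register applications of Proposition \ref{phase_estimation2} with error parameter $\epsilon'/(2k)$ and the same parameter matching ($n=l_{j+1}+3-l_j$, $m=l_{j+1}+h+1-l_j$, giving $2^{m-n}=2^{h-2}$, and $n=m$ for node $k$). The only cosmetic difference is that you phrase the decomposition as a reduced-state mixture plus a union bound ($1-2k\cdot\tfrac{\epsilon'}{2k}$), whereas the paper computes the norm of a product of projectors and uses $\left(1-\tfrac{\epsilon'}{2k}\right)^{2k}\geq 1-\epsilon'$; both yield the same bound.
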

\begin{proof}
By Lemma \ref{psifinal}, the quantum state at step 7 of Algorithm \ref{alg:DDLQA} is 
\begin{equation}
\ket{\Phi}=\dfrac{1}{\sqrt{r}}\sum\limits_{s=0}^{r-1}\left(\mathop{\otimes}\limits_{j=1}^{k}\Ket{\psi_{t_j,\left(\frac{s}{r}\right)_{\left\{l_j,+\infty\right\}}}}_{ja}\Ket{\psi_{t_j,\left(\frac{sg}{r}\right)_{\left\{l_j,+\infty\right\}}}}_{{jb}}\right) |u_{s}\rangle_C.
\end{equation}

Denote 
\begin{equation}
A_{s,j}=\left\{x\in\{0,1\}^{t_j}\Bigg| d_{|m_{ja}|}\left(x_{[1,|m_{ja}|]},\left(\frac{s}{r}\right)_{\{l_j,l_{j+1}+h\}}\right)\leq 2^{h-2}\right\}. 
\end{equation}

Let 
\begin{equation}
Q_{A_{s,j}}=\sum_{i\in A_{s,j}}|i\rangle\langle i|,
\end{equation}
and
\begin{equation}
Q_{A_j}=\sum_{i\in \mathop{\bigcup}\limits_{s=0}^{r-1} A_{s,j}}|i\rangle\langle i|. 
\end{equation}

Denote 
\begin{equation}
A_{s,k}=\left\{x\in\{0,1\}^{t_k}\Bigg| d_{|m_{ka}|}\left(x_{[1,|m_{ka}|]},\left(\frac{s}{r}\right)_{\{l_k,l_{k+1}\}}\right)\leq 1\right\}. 
\end{equation}

Let 
\begin{equation}
Q_{A_{s,k}}=\sum_{i\in A_{s,k}}|i\rangle\langle i|,
\end{equation}
and
\begin{equation}
Q_{A_k}=\sum_{i\in \mathop{\bigcup}\limits_{s=0}^{r-1} A_{s,k}}|i\rangle\langle i|. 
\end{equation}

Denote 
\begin{equation}
B_{s,j}=\left\{x\in\{0,1\}^{t_j}\Bigg|d_{|m_{jb}|}\left(x_{[1,|m_{jb}|]},\left(\frac{sg}{r}\right)_{\{l_j,l_{j+1}+h\}}\right)\leq 2^{h-2}\right\}. 
\end{equation}

Let 
\begin{equation}
Q_{B_{s,j}}=\sum_{i\in  B_{s,j}}|i\rangle\langle i|,
\end{equation}
and
\begin{equation}
Q_{B_j}=\sum_{i\in \mathop{\bigcup}\limits_{s=0}^{r-1} B_{s,j}}|i\rangle\langle i|. 
\end{equation}

Denote 
\begin{equation}
B_{s,k}=\left\{x\in\{0,1\}^{t_k}\Bigg|d_{|m_{kb}|}\left(x_{[1,|m_{kb}|]},\left(\frac{sg}{r}\right)_{\{l_k,l_{k+1}\}}\right)\leq 1\right\}. 
\end{equation}

Let 
\begin{equation}
Q_{B_{s,k}}=\sum_{i\in  B_{s,k}}|i\rangle\langle i|,
\end{equation}
and
\begin{equation}
Q_{B_k}=\sum_{i\in \mathop{\bigcup}\limits_{s=0}^{r-1} B_{s,k}}|i\rangle\langle i|. 
\end{equation}

In step 7 of Algorithm \ref{alg:DDLQA},  the probability that the measurements $m_{ja}$ and $m_{jb}$  satisfy
\begin{equation}
 \begin{split}
&\left(d_{|m_{ja}|}\left(m_{ja},\left(\frac{s}{r}\right)_{\left\{l_j,l_{j+1}+h\right\}}\right)\leq 2^{h-2}\right)\bigcap\left(d_{|m_{jb}|}\left(m_{jb},\left(\frac{sg}{r}\right)_{\left\{l_j,l_{j+1}+h\right\}}\right)\leq 2^{h-2}\right)\\&\bigcap
\left(d_{|m_{ka}|}\left(m_{ka},\left(\frac{s}{r}\right)_{\left\{l_k,l_{k+1}\right\}}\right)\leq 1\right)\bigcap\left(d_{|m_{kb}|}\left(m_{kb},\left(\frac{sg}{r}\right)_{\left\{l_k,l_{k+1}\right\}}\right)\leq 1\right)
\end{split}
\end{equation}
for all $j\in\{1,\cdots,k-1\}$ is 
\begin{align}
&\left\|\left(\mathop{\otimes}\limits_{j=1}^{k}\left(Q_{A_{j}}\otimes Q_{B_{j}}\right)\otimes I^{\otimes L}\right)\ket{\Phi}\right\|^2\\
=&\left\|\left(\mathop{\otimes}\limits_{j=1}^{k}\left(Q_{A_{j}}\otimes Q_{B_{j}}\right)\otimes I^{\otimes L}\right)\dfrac{1}{\sqrt{r}}\sum\limits_{s=0}^{r-1}\left(\mathop{\otimes}\limits_{j=1}^{k}\Ket{\psi_{t_j,\left(\frac{s}{r}\right)_{\left\{l_j,+\infty\right\}}}}_{ja}\right.\right.\left.\left.
\Ket{\psi_{t_j,\left(\frac{sg}{r}\right)_{\left\{l_j,+\infty\right\}}}}_{{jb}}\right) |u_{s}\rangle_C\right\|^2\\
=&\dfrac{1}{r}\sum\limits_{s=0}^{r-1}\left\|\left(\mathop{\otimes}\limits_{j=1}^{k}\left(Q_{A_{j}}\Ket{\psi_{t_j,\left(\frac{s}{r}\right)_{\left\{l_j,+\infty\right\}}}}_{{ja}} Q_{B_{j}}\Ket{\psi_{t_j,\left(\frac{sg}{r}\right)_{\left\{l_j,+\infty\right\}}}}_{{jb}}\right)\right)|u_{s}\rangle_C\right\|^2 ( \text{by Eq.  (\ref{us_orthonormal})})\\
\geq&\dfrac{1}{r}\sum\limits_{s=0}^{r-1}\left\|\left(\mathop{\otimes}\limits_{j=1}^{k}\left(Q_{A_{s,j}}\Ket{\psi_{t_j,\left(\frac{s}{r}\right)_{\left\{l_j,+\infty\right\}}}}_{{ja}} Q_{B_{s,j}}\Ket{\psi_{t_j,\left(\frac{sg}{r}\right)_{\left\{l_j,+\infty\right\}}}}_{{jb}}\right)\right)|u_{s}\rangle_C\right\|^2\\
=&\dfrac{1}{r}\sum\limits_{s=0}^{r-1}\left\|\mathop{\otimes}\limits_{j=1}^{k}\left(Q_{A_{s,j}}\Ket{\psi_{t_j,\left(\frac{s}{r}\right)_{\left\{l_j,+\infty\right\}}}}_{{ja}} Q_{B_{s,j}}\Ket{\psi_{t_j,\left(\frac{sg}{r}\right)_{\left\{l_j,+\infty\right\}}}}_{{jb}}\right)\right\|^2\\
\geq& \dfrac{r\cdot\left(1-\dfrac{\epsilon'}{2k}\right)^{2k}}{r}\ (\text{by Proposition \ref{phase_estimation2}})\\
\geq& 1-\epsilon'\\
>& 1-\epsilon.
\end{align}
Therefore, the lemma holds.
\end{proof}

In the following, we present a  lemma, which is relevant to proving the correctness of  Algorithm \ref{alg:CorrectAndCombine}.
\begin{Lemma}\label{CorrectStep}
Let $\omega,x$ be two $t$-bit strings ($t\geq 3$). Let $z$ be a $(h+1)$-bit string  ($2\leq h<t$). Suppose $d_t(x,\omega)\leq 1$ and $d_{h+1}\left(z,\omega_{[t-h,t]}\right)\leq 2^{h-2}$. Then there  exists a unique  $b\in\left\{0,\pm 1, \cdots,\pm 2^{h-1}\right\}$ such that 
$Sum\left(x_{[t-h,t]},b\right)=z.$
Let $b_1,b_2\in\left\{0,\pm 1, \cdots,\pm 2^{h-2}\right\}$ satisfy 
$Sum(x,b_1)=\omega$ 
and 
$Sum\left(\omega_{[t-h,t]},b_2\right)=z$,
it holds that 
$b=b_1+b_2$.
\end{Lemma}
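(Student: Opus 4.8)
The plan is to construct the required $b$ explicitly as $b_1+b_2$ and then use a short size argument to show this is the only admissible value. Throughout I use the paper's convention of identifying a bit string with its value, together with the elementary fact that for a $t$-bit string $x$ the window $x_{[t-h,t]}$ is exactly the low-order part $x\bmod 2^{h+1}$; consequently $Sum(x,b_1)=\omega$ is the same as $x+b_1\equiv\omega\pmod{2^{t}}$, and $Sum(x_{[t-h,t]},b)=z$ is the same as $x_{[t-h,t]}+b\equiv z\pmod{2^{h+1}}$.

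First I would record that $b_1$ and $b_2$ as in the statement do exist: by part (I) of Lemma \ref{d_t}, $d_t(x,\omega)\le 1\le 2^{h-2}$ produces some $b_1\in\{0,\pm1,\dots,\pm 2^{h-2}\}$ with $Sum(x,b_1)=\omega$, and $d_{h+1}\!\left(z,\omega_{[t-h,t]}\right)\le 2^{h-2}$ produces some $b_2\in\{0,\pm1,\dots,\pm 2^{h-2}\}$ with $Sum\!\left(\omega_{[t-h,t]},b_2\right)=z$ (here $h\ge 2$ is used to ensure $2^{h-2}\ge 1$).

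Next I would verify that $b:=b_1+b_2$ satisfies $Sum(x_{[t-h,t]},b)=z$. Since $h+1\le t$, reducing the congruence $x+b_1\equiv\omega\pmod{2^{t}}$ modulo $2^{h+1}$ gives $x_{[t-h,t]}+b_1\equiv\omega_{[t-h,t]}\pmod{2^{h+1}}$; adding the relation $\omega_{[t-h,t]}+b_2\equiv z\pmod{2^{h+1}}$ yields $x_{[t-h,t]}+(b_1+b_2)\equiv z\pmod{2^{h+1}}$, which is exactly $Sum\!\left(x_{[t-h,t]},b_1+b_2\right)=z$. Moreover $|b_1+b_2|\le|b_1|+|b_2|\le 2^{h-2}+2^{h-2}=2^{h-1}$, so $b_1+b_2\in\{0,\pm1,\dots,\pm 2^{h-1}\}$, establishing the existence claim.

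Finally, for uniqueness of $b$ — and hence the identity $b=b_1+b_2$ — suppose $b,b'\in\{0,\pm1,\dots,\pm 2^{h-1}\}$ both satisfy $Sum(x_{[t-h,t]},b)=Sum(x_{[t-h,t]},b')=z$. Then $b\equiv b'\pmod{2^{h+1}}$ while $|b-b'|\le 2^{h-1}+2^{h-1}=2^{h}<2^{h+1}$, forcing $b=b'$; in particular the $b$ built above is the unique one, and it equals $b_1+b_2$. I do not anticipate a genuine obstacle here: the only point needing care is keeping the window notation $x_{[t-h,t]}$ aligned with the modular reduction $x\bmod 2^{h+1}$, and tracking the powers of two so that the estimates $2^{h-2}+2^{h-2}=2^{h-1}$ and $2^{h-1}+2^{h-1}<2^{h+1}$ place the relevant quantities inside (respectively strictly inside) the ranges claimed for $b$.
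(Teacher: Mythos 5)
Your proposal is correct and follows essentially the same route as the paper: identify the window $x_{[t-h,t]}$ with reduction modulo $2^{h+1}$, show $Sum(x_{[t-h,t]},b_1+b_2)=z$ with $|b_1+b_2|\leq 2^{h-2}+2^{h-2}=2^{h-1}$, and conclude by uniqueness since the admissible range has width $2^h<2^{h+1}$. The only cosmetic difference is that the paper obtains existence and uniqueness of $b$ via the triangle inequality for $d_{h+1}$ and Lemma \ref{d_t} before identifying $b$ with $b_1+b_2$, whereas you construct $b_1+b_2$ directly by adding congruences and argue uniqueness by hand.
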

\begin{proof}
 Since $d_t(x,\omega)\leq 2^{h-2}$ , we have $d_{h+1}\left(x_{[t-h,t]},\omega_{[t-h,t]}\right)\leq 2^{h-2}$. Then we get
\begin{equation}
d_{h+1}\left(x_{[t-h,t]},z\right)\leq d_{h+1}\left(x_{[t-h,t]},\omega_{[t-h,t]}\right)+d_{h+1}\left(\omega_{[t-h,t]},z\right) \leq 2^{h-1}. 
\end{equation}
Hence,  there  exists a unique  $b\in\left\{0,\pm 1, \cdots,\pm 2^{h-1}\right\}$ such that 
\begin{equation}
Sum\left(x_{[t-h,t]},b\right)=z.
\end{equation}

 Moreover, since
\begin{align}
Sum(x,b_1+b_2)&=Sum(Sum(x,b_1),b_2)\\
&=Sum(\omega,b_2),
\end{align}
we get
\begin{align}
Sum\left(x_{[t-h,t]},b_1+b_2\right)&=Sum\left(\omega_{[t-h,t]},b_2\right)\\
&=z.
\end{align}
Therefore, $b=b_1+b_2$. The lemma holds.
\end{proof}

In the following, we give another lemma, which is relevant for proving the correctness of  Algorithm \ref{alg:CorrectAndCombine}.
\begin{Lemma}\label{CorrectResults_correct}
Let $x,y\in\{0,1\}^t$ with $d_{t}(x,y)\leq 2^{h-2}$, where $t>2$ and $2\leq h\leq t$. Then there  exists a unique  $b_0\in\left\{0,\pm 1, \cdots,\pm 2^{h-2}\right\}$ such that $Sum(x,b_0)=y$. Then  $Sum(x,b)=y$ if and only if $Sum\left(x_{[t-h+1,t]},b\right)=y_{[t-h+1,t]}$, where $b\in\left\{0,\pm 1, \cdots,\pm 2^{h-2}\right\}$.
\end{Lemma}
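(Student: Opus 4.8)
The plan is to prove the two assertions in turn, relying on part~(I) of Lemma~\ref{d_t} together with elementary modular arithmetic, and on the key observation that for a $t$-bit string $x$ the substring $x_{[t-h+1,t]}$, read as a number, equals $x\bmod 2^h$ (this is where $2\le h\le t$ enters, guaranteeing $1\le t-h+1\le t$ so that $x_{[t-h+1,t]}$ is well defined).

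First I would establish existence and uniqueness of $b_0$. By part~(I) of Lemma~\ref{d_t}, the hypothesis $d_t(x,y)\le 2^{h-2}$ produces an integer $b_0$ with $|b_0|\le 2^{h-2}$ and $(x+b_0)\bmod 2^t=y$, i.e. $b_0\in\{0,\pm 1,\ldots,\pm 2^{h-2}\}$ and $Sum(x,b_0)=y$. For uniqueness, if $b,b'\in\{0,\pm 1,\ldots,\pm 2^{h-2}\}$ both satisfy $Sum(x,\cdot)=y$, then $x+b\equiv x+b'\pmod{2^t}$, hence $b\equiv b'\pmod{2^t}$; since $|b-b'|\le 2^{h-1}\le 2^{t-1}<2^t$ (using $h\le t$), this forces $b=b'$.

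Next, for the forward direction of the equivalence, suppose $Sum(x,b)=y$, i.e. $x+b=y+m2^t$ for some integer $m$. Reducing modulo $2^h$ (which divides $2^t$ since $h\le t$) gives $x+b\equiv y\pmod{2^h}$, hence $(x_{[t-h+1,t]}+b)\bmod 2^h=y_{[t-h+1,t]}$, that is, $Sum(x_{[t-h+1,t]},b)=y_{[t-h+1,t]}$. For the converse, suppose $Sum(x_{[t-h+1,t]},b)=y_{[t-h+1,t]}$ with $b\in\{0,\pm 1,\ldots,\pm 2^{h-2}\}$. Applying the forward direction to the $b_0$ from the first part yields $Sum(x_{[t-h+1,t]},b_0)=y_{[t-h+1,t]}$ as well, so $x_{[t-h+1,t]}+b\equiv x_{[t-h+1,t]}+b_0\pmod{2^h}$, giving $b\equiv b_0\pmod{2^h}$; since $|b-b_0|\le 2^{h-1}<2^h$, we conclude $b=b_0$, and therefore $Sum(x,b)=Sum(x,b_0)=y$.

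There is no substantial obstacle; the only point that needs a little care is the repeated use of the principle ``two integers lying in a window of length strictly less than the modulus that are congruent modulo it must be equal'', applied once with modulus $2^t$ and once with modulus $2^h$, the relevant bound in each case being $|b-b_0|\le 2^{h-1}<2^h\le 2^t$. Everything else is bookkeeping of index ranges and a direct appeal to part~(I) of Lemma~\ref{d_t}.
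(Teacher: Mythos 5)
Your proof is correct and follows essentially the same route as the paper's: existence and uniqueness of $b_0$ via Lemma \ref{d_t}(I), the forward implication by reducing the congruence $(x+b)\bmod 2^t=y$ modulo $2^h$ and identifying $x_{[t-h+1,t]}$ with $x\bmod 2^h$, and the converse via uniqueness. If anything, your converse is argued more carefully than the paper's, which simply asserts $b=b_0$ from the uniqueness of $b_0$; your extra step of applying the forward direction to $b_0$ to get $Sum\left(x_{[t-h+1,t]},b_0\right)=y_{[t-h+1,t]}$ and then concluding $b=b_0$ from $b\equiv b_0\ (\bmod\ 2^h)$ with $|b-b_0|\leq 2^{h-1}<2^h$ is precisely what makes that assertion rigorous.
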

\begin{proof} By Lemma \ref{d_t}, we know that there exists a unique  $b_0\in\left\{0,\pm 1, \cdots,\pm 2^{h-2}\right\}$ such that $Sum(x,b_0)=y$. In the following, we prove that $Sum(x,b)=y$ if and only if $Sum\left(x_{[t-h+1,t]},b\right)=y_{[t-h+1,t]}$, where $b\in\left\{0,\pm 1, \cdots,\pm 2^{h-2}\right\}$.  

Suppose $Sum(x,b)=y$, then we have
\begin{equation}
\left(x+b\right) \bmod 2^{t}=y,
\end{equation}
\begin{equation}
\left(x+b\right) \bmod 2^{h}=y \bmod 2^{h},
\end{equation}
that is,
\begin{equation}
\left(x_{[t-h+1,t]}+b\right) \bmod 2^h=y_{[t-h+1,t]}.
\end{equation}

Thus, we have
\begin{equation}
Sum\left(x_{[t-h+1,t]},b\right)=y_{[t-h+1,t]},
\end{equation}
where $b\in\left\{0,\pm 1, \cdots,\pm 2^{h-2}\right\}$.

Suppose $Sum\left(x_{[t-h+1,t]},b\right)=y_{[t-h+1,t]}$, where $b\in\left\{0,\pm 1, \cdots,\pm 2^{h-2}\right\}$.

 Since there  exists a unique   $b_0\in\left\{0,\pm 1, \cdots,\pm 2^{h-2}\right\}$ such that
 \begin{equation}
  Sum(x,b_0)=y,  
  \end{equation}
  $b$ is equal to $b_0$, that is, $b$ satisfies 
   \begin{equation}
  Sum(x,b)=y. 
 \end{equation}
  Consequently, the lemma holds.
\end{proof}

In the following, we give a proposition related to the correctness proof and effect analysis of  Algorithm \ref{alg:CorrectAndCombine} and the correctness proof of  Algorithm \ref{alg:DDLQA}.
\begin{Proposition}\label{prop:CorrectAndCombine}
Let $k$, $l_1$, $\cdots$, $l_{k+1}$ be postive integers with $1=l_1<\cdots<l_{k+1}=\lceil\log_2r+1\rceil+1$. Suppose $w$ is an $\lceil\log_2r+1\rceil+1$-bit string and  $x_i$ is an $(l_{i+1}+h+1-l_i)$-bit string such that 
$
d_{|x_i|}\left(x_i,w_{[l_i,l_{i+1}+h]}\right)\leq 2^{h-2}, 
$
where $i=1,\cdots,k-1$ and $2\leq h<t$.  Suppose   $x_k$ is an $(l_{k+1}+1-l_k)$-bit string such that 
$
d_{|x_k|}\left(x_k,w_{[l_k,l_{k+1}]}\right)\leq 1.  
$
Let $y=Correct(x_1,\cdots,x_k)$. Then we have
$
d_{|y|}(y,w)=d_{|x_k|}\left(x_k,w_{[l_k,l_{k+1}]}\right).
$
\end{Proposition}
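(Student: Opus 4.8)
The plan is to analyze the loop in Algorithm \ref{alg:CorrectAndCombine} by induction, running the index $j$ downward from $k$ to $1$, and to prove the invariant that at the end of iteration $j$ the current string $c_j$ satisfies
$d_{|c_j|}\bigl(c_j, w_{[l_j, l_{k+1}]}\bigr) = d_{|x_k|}\bigl(x_k, w_{[l_k, l_{k+1}]}\bigr)$,
together with the auxiliary fact that the suffix of $c_j$ of length $|x_k|$ equals (up to the fixed shift witnessing $d_{|x_k|}(x_k, w_{[l_k,l_{k+1}]})$) a correctly-aligned copy of $w$ on that block, i.e. the ``error'' in $c_j$ relative to $w_{[l_j,l_{k+1}]}$ is exactly the single shift $b_0$ inherited from $x_k$. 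Taking $j=1$ and recalling $l_1 = 1$, $l_{k+1} = \lceil\log_2 r + 1\rceil + 1 = |w|$ then yields the claim, since $c_1 = y$.

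For the base case $j=k$ we have $c_k = m_k = x_k$ and the hypothesis $d_{|x_k|}(x_k, w_{[l_k,l_{k+1}]}) \leq 1$ gives the invariant trivially. For the inductive step, assume the invariant holds for $c_{j+1}$. First I would apply Lemma \ref{CorrectResults_correct} with the pair $x_j$ and $w_{[l_j, l_{j+1}+h]}$ (valid since $d_{|x_j|}(x_j, w_{[l_j,l_{j+1}+h]}) \leq 2^{h-2}$): this produces the unique shift $b_1 \in \{0, \pm 1, \dots, \pm 2^{h-2}\}$ with $Sum(x_j, b_1) = w_{[l_j, l_{j+1}+h]}$, and moreover tells us that this same $b_1$ is the unique element of that set realizing the agreement on the last $h+1$ coordinates. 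Next I would invoke Lemma \ref{CorrectStep} with $\omega = w_{[l_j, l_{j+1}+h]}$, $x = x_j$, and $z = (c_{j+1})_{[1,h+1]}$: the inductive hypothesis guarantees $(c_{j+1})_{[1,h+1]}$ is within $d_{h+1}(\cdot,\cdot) \leq 2^{h-2}$ of $w_{[l_{j+1}, l_{j+1}+h]} = (w_{[l_j,l_{j+1}+h]})_{[\text{last } h+1]}$, so the lemma gives a unique $q_j \in \{0, \pm1, \dots, \pm 2^{h-1}\}$ with $Sum((x_j)_{[l_{j+1}, l_{j+1}+h]}, q_j) = (c_{j+1})_{[1,h+1]}$ — this is exactly the $q_j$ chosen in line 3 of the algorithm, establishing well-definedness — and it decomposes as $q_j = b_1 + b_2$ where $b_2$ is the shift carrying $w_{[l_{j+1}, l_{j+1}+h]}$ to $(c_{j+1})_{[1,h+1]}$. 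Then $p_j = Sum(x_j, q_j)$ agrees with $w_{[l_j, l_{j+1}+h]}$ shifted by $b_2$, and in particular $(p_j)_{[\text{last } h+1]} = (c_{j+1})_{[1,h+1]}$, so the catenation $c_j = p_j \circ (c_{j+1})_{[h+2, |c_{j+1}|]}$ is consistent at the seam and its overall displacement from $w_{[l_j, l_{k+1}]}$ is governed by the displacement of $c_{j+1}$ from $w_{[l_{j+1}, l_{k+1}]}$, which by induction is $b_0$; a short $d_t$-distance computation (using part (I) of Lemma \ref{d_t} to pass between ``distance'' and ``shift'' freely, exactly as in the proof of Lemma \ref{gd_t}) then closes the invariant.

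The main obstacle I anticipate is the bookkeeping at the seam: one must verify carefully that the shift $b_2$ applied to the overlapping window of $w$ is consistent with the global shift $b_0$ carried by $c_{j+1}$ on its full length, so that gluing $p_j$ and the tail of $c_{j+1}$ does not introduce a discontinuity or an off-by-one at position $l_{j+1}+h$ versus position $l_{j+1}+h+1$. This is precisely where Lemma \ref{CorrectResults_correct} earns its keep — it lets me certify that agreement on the short overlap block forces agreement of the shifts on the longer blocks — and where the hypothesis $2 \le h$ (so that $2^{h-2} \ge 1$ and $2^{h-1} \ge 2^{h-2} + 2^{h-2}$, making the triangle-inequality step in Lemma \ref{CorrectStep} tight enough) is essential. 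Once the seam consistency is nailed down, propagating the equality $d_{|y|}(y,w) = d_{|x_k|}(x_k, w_{[l_k,l_{k+1}]})$ down to $j=1$ is routine.
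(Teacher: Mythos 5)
Your proposal is correct and follows essentially the same route as the paper's proof: a downward induction with the invariant that $c_j$ differs from $w_{[l_j,l_{k+1}]}$ by exactly the single global shift inherited from $x_k$, using Lemma \ref{CorrectStep} to identify the algorithm's $q_j$ as a sum of the global shift of $x_j$ and the overlap-window shift, and Lemma \ref{CorrectResults_correct} to pass between window shifts and global shifts at the seam. The paper merely writes out the $j=k-1$ and $j=k-2$ iterations explicitly (with shifts $b_1,\dots,b_5$) before appealing to induction, and, like you, applies Lemma \ref{CorrectStep} under the bound $d(x_j,\omega)\leq 2^{h-2}$ rather than the ``$\leq 1$'' in its statement, so your usage matches the paper's.
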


\begin{proof}
Suppose we input $x_{1}, \cdots, x_k$ to Algorithm \ref{alg:CorrectAndCombine}. 
Since $d_{|x_k|}\left(x_k,w_{[l_k,l_{k+1}]}\right)\leq 1$, we have 
\begin{equation}
d_{h+1}\left((x_k)_{[1,h+1]},w_{[l_k,l_k+h]}\right)\leq 1
\end{equation}
and
\begin{equation}
d_{2}\left((x_k)_{\left[|x_k|-1,|x_k|\right]},w_{[l_{k+1}-1,l_{k+1}]}\right)\leq 1.
\end{equation}

Let $b_1,b_2\in\{0, \pm1\}$ and $b_3\in\{0, \pm1,\cdots,\pm 2^{h-2}\}$ satisfy 
\begin{equation}
Sum\left((x_k)_{\left[|x_k|-1,|x_k|\right]},b_1\right)=w_{[l_{k+1}-1,l_{k+1}]},
\end{equation}
\begin{equation}
Sum\left((x_k)_{[1,h+1]},b_2\right)=w_{[l_{k},l_{k}+h]},
\end{equation}
\begin{equation}
Sum\left((x_{k-1})_{\left[|x_{k-1}|-h,|x_{k-1}|\right]} ,b_3\right)=w_{[l_k,l_k+h]}.
\end{equation}

Let $q_{k-1}$, $p_{k-1}$, $c_{k-1}$  be the same as those in Algorithm \ref{alg:CorrectAndCombine}. 

By Lemma \ref{CorrectStep}, we have 
\begin{equation}
q_{k-1}=b_3-b_2.
\end{equation}

 Combing  Lemma  \ref{CorrectResults_correct}, we get
\begin{align}
Sum(c_{k-1},b_1)
=&Sum(p_{k-1},b_2)\circ w_{[l_k+h+1,l_{k+1}]}\\
=&Sum\left(Sum(x_{k-1},q_{k-1}),b_2\right)\circ w_{[l_k+h+1,l_{k+1}]}\\
=&Sum(x_{k-1},b_3)\circ w_{[l_k+h+1,l_{k+1}]}\\
=&w_{[l_{k-1},l_k+h]} \circ w_{[l_k+h+1,l_{k+1}]}\\
=&w_{[l_{k-1},l_{k+1}]}.
\end{align}

Hence, 
\begin{align}
d_{\left|c_{k-1}\right|}\left(c_{k-1},w_{[l_{k-1},l_{k+1}]}\right)=|b_1|=d_{\left|c_{k}\right|}\left(c_{k},w_{[l_{k},l_{k+1}]}\right).
\end{align}

Since $d_{|x_i|}\left(x_i,w_{[l_i,l_{i+1}+h]}\right)\leq 2^{h-2}$, we have 
\begin{equation}
d_{h+1}\left((x_i)_{[1,h+1]},w_{[l_i,l_i+h]}\right)\leq 2^{h-2}
\end{equation}
and
\begin{equation}
d_{h+1}\left((x_i)_{\left[|x_i|-h,|x_i|\right]},w_{[l_{i+1},l_{i+1}+h]}\right)\leq 2^{h-2},
\end{equation}
$i=1,\cdots,k-1$. 

Let $b_4,b_5\in\{0, \pm1,\cdots,\pm 2^{h-2}\}$ satisfy 
\begin{equation}
Sum\left((x_{k-1})_{[1,h+1]},b_4\right)=w_{[l_{k-1},l_{k-1}+h]},
\end{equation}
\begin{equation}
Sum\left((x_{k-2})_{\left[|x_{k-2}|-h,|x_{k-2}|\right]}, b_5\right)=w_{[l_{k-1},l_{k-1}+h]}.
\end{equation}

Let $q_j$, $p_j$, $c_j$ ($j=1,\cdots,k-2$) be the same as those in Algorithm \ref{alg:CorrectAndCombine}. 

By Lemma \ref{CorrectStep}, we have 
\begin{equation}
q_{k-2}=b_5-b_4.
\end{equation}

 Combing  Lemma  \ref{CorrectResults_correct}, we get
\begin{align}
Sum(c_{k-2},b_1)
=&Sum(p_{k-2},b_4)\circ w_{[l_{k-1}+h+1,l_{k+1}]}\\
=&Sum\left(Sum(x_{k-2},q_{k-2}),b_4\right)\circ w_{[l_{k-1}+h+1,l_{k+1}]}\\
=&Sum(x_{k-2},b_5)\circ w_{[l_{k-1}+h+1,l_{k+1}]}\\
=&w_{[l_{k-2},l_{k-1}+h]} \circ w_{[l_{k-1}+h+1,l_{k+1}]}\\
=&w_{[l_{k-2},l_{k+1}]}.
\end{align}

Hence, 
\begin{align}
d_{\left|c_{k-2}\right|}\left(c_{k-2},w_{[l_{k-2},l_{k+1}]}\right)=|b_1|=d_{\left|c_{k-1}\right|}\left(c_{k-1},w_{[l_{k-1},l_{k+1}]}\right).
\end{align}

By induction, it can be proven that
\begin{align}
\hspace{-2.5em}d_{\left|c_{1}\right|}\left(c_{1},w_{[1,l_{k+1}]}\right)=
d_{\left|c_{k}\right|}\left(c_{k},w_{[l_{k},l_{k+1}]}\right).
\end{align}

Since $y=c_1, c_k=x_k$ and $l_{k+1}=\lceil\log_2r+1\rceil+1$, the proposition holds.
\end{proof}

In the following, we give a theorem, which proves that  Algorithm \ref{alg:DDLQA} is correct.

\begin{Theorem}\label{CorrectnessForDOFA}
In step 8 of Algorithm \ref{alg:DDLQA},  the probability that $m_a$ and $m_b$ satisfy
\begin{equation}
\left(\left|\dfrac{m_a}{2^{|m_a|}}-\dfrac{s}{r}\right|\leq2^{-\lceil\log_2r+1\rceil}\right)\bigcap \left(\left|\dfrac{m_b}{2^{|m_b|}}-\dfrac{sg(\bmod\ r)}{r}\right|\leq2^{-\lceil\log_2r+1\rceil}\right)
\end{equation}
 is greater than $1-\epsilon$, where $s\in \{0,1,\cdots,r-1\}$.

In step 9 of Algorithm \ref{alg:DDLQA}, the probability that $b\equiv a^{\hat{g}} (\bmod\ N)$ holds is  greater than $\dfrac{r-1}{r}(1-\epsilon)$.
\end{Theorem}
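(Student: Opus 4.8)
The plan is to chain together Lemma~\ref{distancestring}, which controls the raw measurement outcomes, Proposition~\ref{prop:CorrectAndCombine}, which propagates those bounds through the subroutine $\mathrm{Correct}$, and the classical rounding argument behind Proposition~\ref{DLQAResult}.

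For the first statement (step 8), I would start from Lemma~\ref{distancestring}: with probability greater than $1-\epsilon$ there is a \emph{single} $s\in\{0,\dots,r-1\}$ such that, for all $j\in\{1,\dots,k-1\}$, $m_{ja}$ and $m_{jb}$ lie in the stated $d_{(\cdot)}$-balls around $(\tfrac sr)_{\{l_j,l_{j+1}+h\}}$ and $(\tfrac{sg}{r})_{\{l_j,l_{j+1}+h\}}$, while $m_{ka},m_{kb}$ lie within $1$ of $(\tfrac sr)_{\{l_k,l_{k+1}\}}$ and $(\tfrac{sg}{r})_{\{l_k,l_{k+1}\}}$. Conditioning on this event, I would apply Proposition~\ref{prop:CorrectAndCombine} twice, once with $w=(\tfrac sr)_{\{1,l_{k+1}\}}$, $x_i=m_{ia}$, and once with $w=(\tfrac{sg(\bmod r)}{r})_{\{1,l_{k+1}\}}$, $x_i=m_{ib}$; its hypotheses hold because $|m_a|=|m_b|=l_{k+1}=\lceil\log_2 r+1\rceil+1$ and $l_{i+1}+h\le l_{k+1}$ for $i\le k-1$, so the block lengths line up and $w_{[l_i,l_{i+1}+h]}=(\tfrac sr)_{\{l_i,l_{i+1}+h\}}$, etc. This yields $d_{|m_a|}\!\left(m_a,(\tfrac sr)_{\{1,|m_a|\}}\right)\le 1$ and $d_{|m_b|}\!\left(m_b,(\tfrac{sg(\bmod r)}{r})_{\{1,|m_b|\}}\right)\le 1$. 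Converting cyclic distance to an arithmetic bound then follows the argument behind Proposition~\ref{DLQAResult}: if $s\ne0$ then $2^{-|m_a|}<\tfrac1r\le\tfrac sr\le\tfrac{r-1}{r}<1-2^{-|m_a|}$, so $(\tfrac sr)_{\{1,|m_a|\}}$ is neither $0\cdots0$ nor $1\cdots1$ and the cyclic distance equals the ordinary difference; adding the truncation error $\left|2^{|m_a|}\tfrac sr-(\tfrac sr)_{\{1,|m_a|\}}\right|<1$ gives $\left|m_a/2^{|m_a|}-s/r\right|<2^{1-|m_a|}=2^{-\lceil\log_2 r+1\rceil}$, and similarly for $m_b$; the case $s=0$ is immediate, since then the phase fed to every node is $0$, every register is measured as $0$, and $m_a=\mathrm{Correct}(0,\dots,0)=0$. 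This proves the first claim.

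For the second statement (step 9), multiplying the two arithmetic bounds by $r$ and using that $r$ is prime with $r>2$ gives $\left|m_ar/2^{|m_a|}-s\right|<\tfrac12$ and $\left|m_br/2^{|m_b|}-sg(\bmod r)\right|<\tfrac12$, so rounding returns $\hat m_a=s$ and $\hat m_b=sg(\bmod r)$; for $s\ne0$ the inverse $s^{-1}\bmod r$ exists, hence $\hat g=\hat m_a^{-1}\hat m_b\equiv g\pmod r$ and $b\equiv a^{\hat g}\pmod N$. To obtain the factor $\tfrac{r-1}{r}$ I would return to the state $\ket{\Phi}$ of Lemma~\ref{psifinal} and use orthonormality of the $\ket{u_s}$ to write the overall success probability as $\tfrac1r\sum_{s=0}^{r-1}P_s$, where $P_s$ is the probability that the registers in the $s$-branch produce an outcome that is ``$s$-consistent'' in the sense above; dropping the $s=0$ term and bounding each remaining $P_s\ge\left(1-\tfrac{\epsilon'}{2k}\right)^{2k}\ge1-\epsilon'>1-\epsilon$ via Proposition~\ref{phase_estimation2} gives success probability $>\tfrac{r-1}{r}(1-\epsilon)$.

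The step I expect to require the most care is this last one: one has to verify that an $s$-consistent outcome with $s\ne0$ genuinely forces step 9 to output the correct $\hat g$ (so that success can be lower-bounded branch by branch) and that the $\tfrac1r$-weighted sum over $s$ is legitimate even though the defining events for different values of $s$ may overlap — which is precisely where orthonormality of $\{\ket{u_s}\}$ and the tensor-product form of $\ket{\Phi}$ enter, exactly as in the proof of Lemma~\ref{distancestring}.
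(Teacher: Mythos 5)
Your proposal is correct and follows essentially the same route as the paper: Lemma \ref{distancestring} combined with Proposition \ref{prop:CorrectAndCombine} to get $d_{|m_a|}\left(m_a,\left(\frac{s}{r}\right)_{\{1,|m_a|\}}\right)\leq 1$ (and likewise for $m_b$), the same case split on $s=0$ versus $s\neq 0$ with the all-zeros/all-ones exclusion and truncation-error bound to pass to the arithmetic estimate, and the same rounding-plus-inverse argument for step 9. Your final branch-by-branch accounting over the $\ket{u_s}$ components is just a more explicit rendering of the paper's informal ``probability that $s\neq 0$ is $\frac{r-1}{r}$'' step, not a different method.
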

\begin{proof}
In step 8 of Algorithm \ref{alg:DDLQA}, based on Lemma \ref{distancestring} and Proposition \ref{prop:CorrectAndCombine}, we get that the probability of $m_a$ and $m_b$ satisfying
\begin{equation}\label{eqdistanceprobability1}
\left(d_{|m_{a}|}\left(m_{a},\left(\frac{s}{r}\right)_{\{1,|m_a|\}}\right)\leq 1\right) \bigcap
\left(d_{|m_{b}|}\left(m_{b},\left(\frac{sg}{r}\right)_{\{1,|m_b|\}}\right)\leq 1\right)
\end{equation}
is greater than $1-\epsilon$, where $s\in \{0,1,\cdots,r-1\}$.

If $\dfrac{s}{r}$ is an integer, i.e., $s=0$, then we have 
\begin{equation}\label{eqdistanceprobability2}
m_{a}=\left(\frac{s}{r}\right)_{\{1,|m_a|\}}.
\end{equation}

If $\dfrac{s}{r}$ is not an integer, then we have
\begin{equation}
2^{-{\lceil\log_2r+1\rceil-1}}<\dfrac{1}{r}\leq \dfrac{s }{r}\leq \dfrac{r-1}{r}<1-2^{-{\lceil\log_2r+1\rceil-1}}.
\end{equation}
So we get $\left(\dfrac{s}{r}\right)_{\{1,|m_a|\}}=\left(\dfrac{s}{r}\right)_{\{1,\lceil\log_2r+1\rceil+1\}}$ is not $00\ldots0$ or $11\ldots1$. 

Thus, we have
\begin{equation}
\left|m_a-\left(\frac{s}{r}\right)_{\{1,|m_a|\}}\right|=d_{|m_{a}|}\left(m_{a},\left(\frac{s}{r}\right)_{\{1,|m_a|\}}\right)\leq 1.
\end{equation}

Since $\left|2^{|m_a|}\cdot\dfrac{s}{r}-\left(\dfrac{s}{r}\right)_{\{1,|m_a|\}}\right|\leq 1$, we have
\begin{align}
&\left|m_a-2^{|m_a|}\cdot\dfrac{s}{r}\right|\\\leq& \left|m_a-\left(\frac{s}{r}\right)_{\{1,|m_a|\}}\right|+\left|\left(\frac{s}{r}\right)_{\{1,|m_a|\}}-2^{|m_a|}\cdot\dfrac{s}{r}\right|\\
\leq& 2.
\end{align}

Therefore, we have

\begin{equation}
\left|\dfrac{m_a}{2^{|m_a|}}-\dfrac{s}{r}\right|\leq2^{1-|m_a|}=2^{-\lceil\log_2r+1\rceil}.
\end{equation}

Similarly, we can get
\begin{equation}
\left|\dfrac{m_b}{2^{|m_b|}}-\dfrac{sg(\bmod\ r)}{r}\right|\leq2^{-\lceil\log_2r+1\rceil}.
\end{equation}

Therefore, in step 8 of Algorithm \ref{alg:DDLQA},  the probability of $m_a$ and $m_b$ satisfying
\begin{equation}\label{eqmamb}
\left(\left|\dfrac{m_a}{2^{|m_a|}}-\dfrac{s}{r}\right|\leq2^{-\lceil\log_2r+1\rceil}\right)\bigcap \left(\left|\dfrac{m_b}{2^{|m_b|}}-\dfrac{sg(\bmod\ r)}{r}\right|\leq2^{-\lceil\log_2r+1\rceil}\right)
\end{equation}
 is greater than $1-\epsilon$, where $s\in \{0,1,\cdots,r-1\}$.

 From Eq. (\ref{eqmamb}) and $r$ being a prime number greater than 2, we can get the probability of 
 \begin{equation}\label{eqmambr}
\left(\left|\dfrac{m_ar}{2^{|m_a|}}-s\right|<2^{-1}\right)\bigcap \left(\left|\dfrac{m_br}{2^{|m_b|}}-sg(\bmod\ r)\right|<2^{-1}\right)
\end{equation}
 is greater than $1-\epsilon$, where $s\in \{0,1,\cdots,r-1\}$.
 
Since $\hat{m}_a$ and $\hat{m}_b$ are rounded by $m_ar/2^{|m_a|}$ and $m_br/2^{|m_b|}$, respectively, there are $\hat{m}_a=s$ and $\hat{m}_b=sg(\bmod\ r)$.

Due to $r$ being a prime, if $s\neq 0$, it follows that the multiplicative inverse of $s (\bmod\ r)$ exists,  and we have 
 \begin{equation}\label{d}
 \hat{g}=\hat{m}_a^{-1}\hat{m}_b(\bmod\ r)=s^{-1}(sg(\bmod\ r))(\bmod\ r)=g.
\end{equation}

The probability that $s$ is taken from $\{0,1,\cdots,r-1\}$ and $s\neq 0$ is $\dfrac{r-1}{r}$.  Then according to Eq. (\ref{eqmambr}) and Eq. (\ref{d}), the probability that $b\equiv a^{\hat{g}} (\bmod\ N)$ holds is  greater than $\dfrac{r-1}{r}(1-\epsilon)$.
\end{proof}

\subsection{Algorithm  complexity analysis}\label{DDLQA Algorithm complexity  analysis}

In the following, we analyze the complexity of Algorithm \ref{alg:DLQA} and Algorithm \ref{alg:DDLQA}. 
The complexity of the circuit of (distributed) quantum algorithm for discrete logarithm problem depends on the construction of $C_t(M_a)$ ($C_t(M_b)$).  
The implementation of $C_t(M_a)$ ($C_t(M_b)$) proposed by Shor has a time complexity of $O\left(L^3\right)$ and a space complexity of $O(L)$ \cite{shor1994algorithms}.

\textbf{Space complexity.} The implementation of the operator $C_t(M_a)$ ($C_t(M_b)$)  needs $t+L$ qubits plus $c$ auxiliary qubits for any positive integer $a$ $(b)$, where $c$ is $L+O(1)$. 
As a result, the maximum number of qubits required by a
single node of Algorithm \ref{alg:DLQA} is $2\Bigg(\lceil\log_2r+1\rceil$+$\left.\left\lceil\log_2\left(2+\dfrac{1}{\epsilon}\right)\right\rceil\right)+L+O(1)$.
The maximum number of qubits required by a
single node of Algorithm \ref{alg:DDLQA}  is $2\left(\dfrac{\lceil\log_2r+1\rceil+1}{k}+\right.$ $\left.\left\lceil\log_2\left(2+\dfrac{k}{\epsilon'}\right)\right\rceil\right)+L+O(1)$, where $k$ is the number of compute nodes, $0<\epsilon'<\epsilon<1$, and $L=\left\lfloor\log_2 N\right\rfloor+1$. 

\textbf{Time complexity.} The operator $C_t(M_a)$ ($C_t(M_b)$) can be implemented by  $O\left(tL^2\right)$ elementary gates. Thus,  the   gate complexity (or time complexity) in both Algorithm \ref{alg:DLQA}  and  Algorithm \ref{alg:DDLQA} is $O\left(L^3\right)$.

\textbf{Circuit depth.} By Fig. \ref{fig:controlled-U}, we know that the circuit depth of $C_t(M_a)$ $(C_t(M_b))$ depends on the circuit depth of  controlled-$M_a^{2^n}$ $\Big($controlled-$M_b^{2^n}$$\Big)$ $(n=0,1,\cdots,t-1)$ and $t$. The circuit depth of controlled-$M_a^{2^n}$ $\Big($controlled-$M_b^{2^n}$$\Big)$ is $O(L^2)$.
 By  observing the value ``$t$" in Algorithm \ref{alg:DLQA}  and Algorithm \ref{alg:DDLQA}, we clearly get that the circuit depth of each node in  Algorithm \ref{alg:DDLQA}  is less than Algorithm \ref{alg:DLQA}, even though both are $O(L^3)$.

\textbf{Quantum communication complexity.} In the following, we analyze the  quantum communication complexity of Algorithm \ref{alg:DDLQA}. First, Node $T_1$ transmits $L$ qubits from register $C$ to Node $T_2$. Then, Node $T_2$ transmits $L$ qubits from register $C$ to Node $T_3$, and so on.  Finally, Node $T_{k-1}$ transmits $L$ qubits from register $C$ to Node $T_k$.  
 Therefore, the quantum communication complexity of Algorithm \ref{alg:DDLQA} is $O(kL)$. 
 
 \textbf{Probability of success.} The success probability of Algorithm \ref{alg:DLQA} is $\dfrac{r-1}{r}(1-\epsilon)$, while the success probability of Algorithm \ref{alg:DDLQA} is $\dfrac{r-1}{r}(1-\epsilon')$. Since $\epsilon'<\epsilon$, the success probability of Algorithm \ref{alg:DDLQA} is higher than that of Algorithm \ref{alg:DLQA}.

\begin{table*}[h]
\renewcommand{\arraystretch}{2.5}
\caption{Comparisons of Algorithm \ref{alg:DLQA} and Algorithm \ref{alg:DDLQA}.} \label{table:DDLQA}
\resizebox{\linewidth}{!}{ 
\begin{tabular}{*{6}{c}}
\toprule
Algorithm & Space complexity &   Time complexity &Circuit depth & \makecell[c]{Quantum communication \\ complexity}&Probability of success\\
\midrule
Algorithm \ref{alg:DLQA} & \makecell[c]{$2\left(\lceil\log_2r+1\rceil+\left\lceil\log_2\left(2+\dfrac{1}{\epsilon}\right)\right\rceil\right)+L+O(1)$}  &$O\left(L^3\right)$  & $O\left(L^3\right)$  & $0$ & $\dfrac{r-1}{r}(1-\epsilon)$ \\
\hline
Algorithm \ref{alg:DDLQA} & \makecell[c]{$2\left(\dfrac{\lceil\log_2r+1\rceil+1}{k}+\left\lceil\log_2\left(2+\dfrac{k}{\epsilon'}\right)\right\rceil\right)+L+O(1)$} & $O\left(L^3\right)$ & $O\left(L^3\right)$ & $O(kL)$ & $\dfrac{r-1}{r}(1-\epsilon')$ \\ 
\bottomrule
\end{tabular}
}
\end{table*}

\section{Conclusions} \label{sec:conclusions}

By employing the technical method discovered by Xiao and Qiu et al \cite{Xiao2023DQAkShor},  we have proposed a  quantum-classical hybrid distributed quantum algorithm for discrete logarithm problem, where  quantum  algorithm is used to obtain results, while classical algorithm is used to ensure  the accuracy of the results. There are multiple computers working sequentially via quantum teleportation. Each of them can obtain an estimation of partial bits of the result with high probability.   The space complexity of our algorithm is lower than Shor's quantum algorithm. 
Also, the success probability of  our algorithm is higher than that of Shor's quantum algorithm.
In addition, we have generalized the classical error correction technique proposed in \cite{Xiao2023DQAkShor}, by extending  it from three bits to more than three bits.

We have shown that our  algorithm  has advantages over the  quantum algorithm for discrete logarithm problem in space complexity, circuit depth, and probability of success, which may be crucial  in the NISQ era. In  future research, we may explore the applications of our algorithm to lattice-based cryptanalysis, and design a  distributed quantum algorithm for solving systems of linear equations.


\end{document}